\theoremstyle{plain}\newtheorem{theorem}{Theorem}[section]
\theoremstyle{plain}\newtheorem{lemma}[theorem]{Lemma}
\theoremstyle{plain}\newtheorem{corollary}[theorem]{Corollary}
\theoremstyle{plain}
\theoremstyle{plain}\newtheorem{proposition}[theorem]{Proposition}
\theoremstyle{definition}
\theoremstyle{remark}\newtheorem{remark}{Remark}
\theoremstyle{definition}\newtheorem{def:and:lemma}[theorem]{Definition and Lemma}
\newcommandx{\unsure}[2][1=]{\todo[linecolor=red,backgroundcolor=red!25,bordercolor=red,#1]{#2}}
\newcommand{\D}{\textnormal{d}}
\newcommand{\norm}[2][]{\big | \hspace{-0.2mm} \big | #2 \big | \hspace{-0.2mm} \big |_{#1}}
\newcommand{\snorm}[2][]{| \hspace{-0.2mm}| #2 | \hspace{-0.2mm} |_{#1}}
\newcommand{\T}[1]{}
\newcounter{remarks}
\newcommand{\lsp}{\big \langle }
\newcommand{\rsp}{\big \rangle }
\newcommand{\rspH}{\big \rangle_{\hspace{-1mm}\mathscr H} }
\newcommand{\im}{\operatorname{Im}}
\newcommand{\re}{\operatorname{Re}}
\begin{document}

\bibliographystyle{alpha}

\title{\textbf{A note on the Fr\"ohlich dynamics in the\\ strong coupling limit}
}

\author{David Mitrouskas}

\date{\vspace{-5.5mm}\today}

\maketitle


\frenchspacing

\begin{abstract}
We revise a previous result about the Fr\"ohlich dynamics in the strong coupling limit obtained in \cite{Griesemer2017}. In the latter it was shown that the Fr\"ohlich time evolution applied to the initial state $\varphi_0 \otimes \xi_\alpha$, where $\varphi_0$ is the electron ground state of the Pekar energy functional and $\xi_\alpha$ the associated coherent state of the phonons, can be approximated by a global phase for times small compared to $\alpha^2$. In the present note we prove that a similar approximation holds for $t=O(\alpha^2)$ if one includes a nontrivial effective dynamics for the phonons that is generated by an operator proportional to $\alpha^{-2}$ and quadratic in creation and annihilation operators.
Our result implies that the electron ground state remains close to its initial state for times of order $\alpha^2$ while the phonon fluctuations around the coherent state $\xi_\alpha$ can be described by a time-dependent Bogoliubov transformation.
\end{abstract}
\vspace{0mm}
\noindent \text{MSC class:} 81Q05, 81Q15, 82C10
\\[0.5mm]
\noindent \text{Keywords:} Fr\"ohlich polaron, strong coupling limit, effective dynamics, quantum corrections

\section{Introduction and Main Result}

1.1.\ \textbf{The model.} The Fr\"ohlich polaron is a quantum model for a large polaron which describes an electron in an ionic lattice interacting with the excitations (phonons) of this lattice \cite{Froehlich1937,DevreeseA2010}. \textit{Large} refers to the assumption that the extension of the electron is much larger compared to the lattice spacing which can thus be approximated by a continuum. In this model, the energy and the dynamics of the electron and the phonons are described by the Fr\"ohlich Hamiltonian 
\begin{align}\label{eq: def froehlich ham} 
H^{\rm F}_{\rm phys,\alpha}  = p^2 \otimes 1 + 1 \otimes N + \sqrt \alpha \phi(G_x)
\vspace{1.5mm}
\end{align}
that acts  on the Hilbert space $\mathscr H = L^2(\mathbb R^3,\D x)\otimes \mathcal F$. Here $\mathcal F = \bigoplus_{n=0}^\infty  L^2(\mathbb R^3,\D k)^{\otimes_{\rm sym}^n} $ is the bosonic Fock space, $x$ and $p= -i\nabla_x$ denote the position and momentum operator of the electron, respectively, and $N$ is the number operator on $\mathcal F$. The interaction between the electron and the phonons is described by $\phi(G_x) = a( G_x ) + a^*(G_x)$ with $a(f)$ and $a^*(f)$ the usual annihilation and creation operators on $\mathcal F$ and $G_x$ the bounded multiplication operator defined for any $x\in \mathbb R^3$ by the function
\begin{align}\label{eq: definition of G}
G_x(k) = \frac{e^{ - ikx}}{ 2\pi \vert k \vert}.
\end{align}
The creation and annihilation operators satisfy the canonical commutation relations
\begin{align}\label{eq: canonical commutation relations}
[ a(f) , a^*(g) ] = \langle f,g \rangle_{L^2}, \quad  [ a(f) , a(g) ] =  [ a^*(f) , a^*(g) ] = 0\quad \ \forall \ f,g\in L^2(\mathbb R^3,\D k).
\end{align}
Finally the number $\alpha >0$ is a dimensionless coupling parameter that models the strength of the interaction. The regime $\alpha \to \infty$ is called the strong coupling limit.

By a change of units which corresponds to rescaling all lengths by a factor $\alpha^{-1}$, the Fr\"ohlich Hamiltonian $  H^{\rm F}_{\rm phys,\alpha}$ is unitarily equivalent to the operator $\alpha^{2}   H^{\rm F}_{\alpha}$ with\footnote{See \cite[Appendix A]{FrankS2014} or \cite[Appendix B]{Griesemer2017}.} 
\begin{align} \label{eq: definition Froehlich Hamiltonian in scu}
 H^{\rm F}_{\alpha} =  p^2 \otimes 1 + 1 \otimes \alpha^{-2}N + \alpha^{-1} \phi (G_x) .
\end{align}
In the analysis of the strong coupling limit it is more convenient to work in strong coupling units, i.e.\ to use $H^{\rm F}_{\alpha}$ instead of the original Fr\"ohlich Hamiltonian $H^{\rm F}_{\textnormal{phys},\alpha}$ and then consider rescaled values of energy $E = \alpha^2 E_{\rm phys}$ and time $t =  \alpha^2  t_{\rm phys}$. This explains why $t=O(\alpha^2)$ is the time scale we are interested in for the dynamics generated by $H^{\rm F}_\alpha$.

In this work we study the large $\alpha$ limit of the time evolved state $\Psi_\alpha (t) = e^{-iH^{\rm F}_\alpha t }\Psi_\alpha $ for a special initial state, namely the Pekar product state $\Psi_\alpha = \varphi_0 \otimes  \xi_\alpha $ where $\varphi_0\in H^1(\mathbb R^3,\D x)$ is the self-trapped electron ground state of the Pekar energy functional (to be defined below) and $ \xi_\alpha = W(\alpha f_0 )^* \Omega_0$ is the corresponding coherent phonon state. That is to say, $\Omega_0 = (1,0,0,...)$ is the normalized vacuum state in $\mathcal F$ and
\begin{align}\label{eq: def of Weyl operator}
W(\alpha f_0 ) = \exp\big(  a^*(\alpha f_0 ) - a( \alpha f_0 )  \big)
\end{align}
denotes the Weyl operator w.r.t. the function
\begin{align}\label{eq: def of f}
\alpha f_0(k) = \alpha \lsp \varphi_0, G_x (k) \varphi_0 \rsp_{L^2}  =  \frac{\alpha}{2\pi \vert k \vert} \int_{\mathbb R^3}  e^{-ikx} \vert \varphi_0(x)\vert^2  \, \D x . 
\end{align}
We recall that the Weyl operator is unitary and satisfies the shift relation
\begin{align}\label{eq: shift relations Weyl}
W(\alpha f_0)^* a(g) W(\alpha f_0) = a(g) + \alpha \lsp g,f_0\rsp_{L^2}
\end{align}
for any $g\in L^2(\mathbb R^3,\D k)$.

The Pekar energy functional is defined by 
\begin{align}\label{eq: Pekar energy functional}
\mathcal E^{\rm P}(\varphi ) = \int_{\mathbb R^3} \vert \nabla \varphi (x)\vert^2 \D x - \frac{1}{2}\int_{\mathbb R^3} \int_{\mathbb R^3} \frac{\vert \varphi (x)\vert^2 \vert \varphi (y)\vert^2}{\vert x-y\vert}
 \D x \D y
\end{align}
with constraint $\snorm[L^2]{\varphi} =1$. It was shown in \cite{Lieb1977} that $\mathcal E^{\rm P}(\varphi)$ admits a unique minimizer (unique up to spatial translations)
\begin{align}\label{eq: Pekar ground state}
\varphi_0 \in H^1(\mathbb R^3, \D x ) \cap \big\{\varphi  \in L^2(\mathbb R^3,\D x )\, : \,  \snorm[L^2]{\varphi }=1\big\}
\end{align}
that can be chosen positively. The minimizer further solves the Euler--Lagrange equation $(h^{\varphi_0} - \lambda ) \varphi_0 =0 $ where
\begin{align}\label{eq: def pekar hamiltonian}
h^{\varphi_0} = p^2 + V^{\varphi_0} , \quad V^{\varphi_0}(x) = - 2\re \langle G_x , f_0 \rangle_{L^2},
\end{align}
and $\lambda  = \mathcal E^{\rm P}(\varphi_0) - \snorm[L^2]{f_0}^2$. By its positivity, it follows that $\varphi_0$ is the unique ground state of the Schr\"odinger operator $h^{\varphi_0}$ and that $\lambda  = \inf \sigma(h^{\varphi_0})$ belongs to the discrete spectrum of $h^{\varphi_0}$, see\ \cite[Sec.\ 12]{ReedSimonVol4}. Introducing the orthogonal projector $Q = 1- P$ with $P = \vert\varphi_0 \rangle \langle \varphi_0 \vert$, we hence known
that $h^{\varphi_0} - \lambda$ is a positive operator when restricted to the closed subspace $\textnormal{ran}Q \subseteq L^2(\mathbb R^3,\D x)$. This allows the definition of the restricted resolvent
\begin{align}\label{eq: def of restricted resolvent}
R = Q ( h^{\varphi_0} - \lambda )^{-1} Q 
\end{align}
as a bounded operator in $L^2(\mathbb R^3,\D x)$. The fact that $R$ is independent of $\alpha$ and thus bounded uniformly as $\alpha\to \infty$ is a crucial ingredient in the analysis of the strong coupling limit of $\Psi_\alpha(t)$. In a nutshell, it ensures a separation of scales as $\alpha \to \infty$ of the different parts of the Fr\"ohlich Hamiltonian $H^{\rm F}_\alpha$ when the latter is applied to states of the form $\varphi \otimes W(\alpha f_0)^* \eta$ for suitable $\varphi \in \text{ran}Q$ and $\eta \in \mathcal F$.

That the scale separation of the different parts in $H_\alpha^{\rm F}$ allows an effective description of the Fr\"ohlich dynamics for times $t=o(\alpha^2)$ was first observed in \cite{Griesemer2017}. There it was shown that the wave function $\Psi_\alpha(t) = e^{-iH^{\rm F }_\alpha t} \varphi_0 \otimes W(\alpha f_0 )^* \Omega_0  $ remains close to its initial state up to a global phase factor, i.e.
\begin{align}\label{eq: main estimate Griesemer}
\norm[\mathscr H]{  \Psi_\alpha(t)  - e^{-i  \mathcal E^{\rm P}(\varphi_0)  t}  \varphi_0 \otimes W(\alpha f_0 )^* \Omega_0  } \le C \, \vert t\vert^{1/2} \alpha^{-1}
\end{align} 
for some $C>0$. Since the initial state is normalized to one, the upper bound is meaningful for $t \ll \alpha^2$. A similar approximation was obtained in \cite{LeopoldRSS2019} for more general initial states, namely Pekar product states in which the electron is initially trapped in the classical field produced by a given coherent state of the phonons. Modulo a global phase factor, the effective dynamics is then described by the Pekar product state  $\varphi_{\text{\tiny LP}}(t) \otimes W(\alpha f_\text{\tiny LP}(t))^*\Omega_0$ with $(\varphi_\text{\tiny LP}(t), f_\text{\tiny LP}(t))$ solving the time-dependent Landau--Pekar equations, cf. \cite[Eqn. (8)]{LeopoldRSS2019}. In fact, the effective dynamics in \eqref{eq: main estimate Griesemer} can be understood as the special case in which $(\varphi_0, f_0)$ are the stationary ground state solutions of the Landau--Pekar equations. The proof of the nonstationary problem, however, is technically more demanding as it is based on a nonlinear adiabatic theorem for the solution of the Landau--Pekar equations, see \cite[Theorem II.1]{LeopoldRSS2019}. Loosely speaking, the latter shows that the scale separation of the different parts in the Fr\"ohlich Hamiltonian remains valid on some suitable time scale also in the nonstationary case. An adiabatic theorem for the Landau--Pekar equations in one spatial dimension has been derived in \cite{Frank2017,FrankG2019}. Earlier results about the Fr\"ohlich dynamics in the strong coupling limit provide approximations for $t=o(\alpha)$ but for much more general initial Pekar product states $\varphi \otimes W(\alpha f)^*\Omega_0$ with no particular assumption about the relation between $\varphi$ and $f$, see \cite{FrankS2014,FrankG2017}. To our knowledge, there are no results available to date that provide an approximation for the Fr\"ohlich dynamics for $t=O(\alpha^2)$.

\begin{remark} The particular choice of our initial state $\Psi_\alpha = \varphi_0 \otimes \xi_\alpha$ is motivated by Pekar's approximation of the ground state energy of the Fr\"ohlich Hamiltonian \cite{Pekar1954}. Taking the expectation value of $H^{\rm F}_\alpha$ for general Pekar states $\varphi \otimes W(\alpha f)^*\Omega_0$ and minimizing over the phonon mode $f \in L^2(\mathbb R,\D k )$ leads to the Pekar functional $\mathcal E^{\rm P}(\varphi )$. That Pekar's approximations is accurate in the strong coupling limit was rigorously proved in \cite{Donsker1983} and later, using a different approach which provided in addition a quantitative error erstimate, in \cite{LiebT1997}. They showed
\begin{align}\label{eq: Pekar approximation energy}
\inf \sigma ( H^{\rm F}_\alpha ) =  \mathcal E^{\rm P}(\varphi_0 )  + o(1)
\end{align}
as $\alpha\to \infty$. The physical picture behind this result is that the electron creates a classical phonon field which in turn leads to an effective trapping of the electron. This self-trapping mechanism is described by the ground state of \eqref{eq: Pekar energy functional}. Let us also mention that the rigorous derivation of the next order contribution in \eqref{eq: Pekar approximation energy} is still an open problem that was recently solved in \cite{FrankS2019} for a model in which the Fr\"ohlich polaron is assumed to be confined to a suitably bounded region $\Lambda \subset \mathbb R^3$.
\end{remark}

\begin{remark} We note that \eqref{eq: definition Froehlich Hamiltonian in scu}, and equally \eqref{eq: def froehlich ham}, is somewhat formal since $G_x\notin L^2(\mathbb R^3,\D k)$ and hence $\phi(G_x)$ is not a densely defined operator. However, by a well-known argument that goes back to Lieb and Yamazaki \cite{LiebY1958}, the right side of \eqref{eq: definition Froehlich Hamiltonian in scu} defines a closed semi-bounded quadratic form with domain given by the form domain of $ p^2\otimes 1 + 1\otimes N$. The Hamiltonian $H^{\rm F}_{\alpha}$ is then defined as the unique self-adjoint operator associated with this quadratic form, cf. \cite[Thm. VIII.15]{ReedSimonVol1}. For the purpose of this work it is sufficient to use the form representation given in \eqref{eq: definition Froehlich Hamiltonian in scu}. Alternative approaches to define the Fr\"ohlich Hamiltonian with an explicit characterization of its domain were discussed more recently in \cite{GriesemerWuensch2016,LampartSchmidt19}.
\end{remark}
\vspace{1mm}
\noindent 1.2.\ \textbf{Effective dynamics.} Our goal is to derive an approximation similar to \eqref{eq: main estimate Griesemer} for times $t= O(\alpha^2)$. To achieve this, we compare $\Psi_\alpha(t)$ with an effective time evolution that is generated by the Hamiltonian
\begin{align}\label{eq: effective hamiltonian without Weyl operators}
H_{\alpha}^{\varphi_0} = 1 \otimes \lsp \varphi_0 , \big( H^{\rm F}_\alpha -  (\alpha^{-1} \phi(G_x )-V^{\varphi_0}) (R  \otimes 1) (\alpha^{-1} \phi(G_x )-V^{\varphi_0} )  \big) \varphi_0 \rsp_{L^2}.
\end{align}
In the following proposition we clarify the difference compared to the ansatz in \eqref{eq: main estimate Griesemer} and, more importantly, we obtain the existence of a unitary time evolution generated by $H_{\alpha}^{\varphi_0}$.

\begin{proposition}\label{prop: effective hamiltonian} For any $\alpha >0$ we have
\begin{align}\label{eq: effective hamiltonian}
 W(\alpha f_0) H^{\varphi_0}_{ \alpha }   W(\alpha f_0)^* -   \mathcal E^{\rm P}(\varphi_0)  =  1\otimes   \alpha^{-2}(N-A^{\varphi_0})
\end{align}
with the operator $ A^{\varphi_0}: \mathcal F \to \mathcal F$ defined by 
\begin{align}
 A^{\varphi_0} =  \big\langle \varphi_0 , \phi(G_x ) (R \otimes 1) \phi(G_x )   \varphi_0 \big\rangle_{L^2 } .\label{eq: def of K} 
\end{align}
Moreover, $\mathscr D(N) \subseteq \mathscr D(N-A^{\varphi_0})$ and $N-A^{\varphi_0}$ is essentially self-adjoint on $\mathcal F$. (We denote its closure again by $N-A^{\varphi_0}$.)
\end{proposition}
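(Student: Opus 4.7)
The proof naturally splits into (i) verifying the algebraic identity \eqref{eq: effective hamiltonian}, and (ii) establishing essential self-adjointness of $N-A^{\varphi_0}$.

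\textbf{The identity.} This amounts to a direct computation from the Weyl shift relation \eqref{eq: shift relations Weyl}. Combining the shift on $a(g)$ with its adjoint and the definition $V^{\varphi_0}(x)=-2\re\lsp G_x, f_0\rsp_{L^2}$ gives
\[
W(\alpha f_0)\,\phi(G_x)\,W(\alpha f_0)^* \;=\; \phi(G_x) + \alpha V^{\varphi_0}(x),
\]
which produces the clean cancellation $W(\alpha f_0)(\alpha^{-1}\phi(G_x)-V^{\varphi_0})W(\alpha f_0)^* = \alpha^{-1}\phi(G_x)$. Expanding $N$ in an orthonormal basis containing $f_0/\snorm[L^2]{f_0}$ yields $W(\alpha f_0)\,N\,W(\alpha f_0)^* = N - \alpha\phi(f_0)+\alpha^2\snorm[L^2]{f_0}^2$. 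Substituting into \eqref{eq: definition Froehlich Hamiltonian in scu} and taking the $\varphi_0$-expectation in the electron factor, the relations $h^{\varphi_0}\varphi_0=\lambda\varphi_0$, $\lsp\varphi_0,\phi(G_x)\varphi_0\rsp_{L^2}=\phi(f_0)$, and $\lambda+\snorm[L^2]{f_0}^2=\mathcal E^{\rm P}(\varphi_0)$ collapse the $\alpha^{-1}$ contributions and leave $\mathcal E^{\rm P}(\varphi_0)+\alpha^{-2}N$. The subtracted cross term becomes $\alpha^{-2}\lsp\varphi_0,\phi(G_x)(R\otimes 1)\phi(G_x)\varphi_0\rsp_{L^2} = \alpha^{-2}A^{\varphi_0}$, giving \eqref{eq: effective hamiltonian}.

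\textbf{Domain inclusion.} Writing $\phi(G_x)=a(G_x)+a^*(G_x)$ and Wick-ordering $\phi(G_x)(R\otimes 1)\phi(G_x)$, I would decompose $A^{\varphi_0}$ into a $c$-number, a number-preserving quadratic piece $\int K_{\rm pp}(k,k')a^*(k)a(k')\,\D k\D k'$, and two pair pieces $\int K_{\rm pc}(k,k')a^*(k)a^*(k')\,\D k\D k' + \text{h.c.}$, with kernels of the schematic form
\[
K(k,k') \;=\; \iint \D x\,\D y\;\overline{\varphi_0(x)}\,R(x,y)\,\varphi_0(y)\,G_x(k)\,G_y(k').
\]
Because $R$ maps $L^2$ continuously into $H^2(\RRR^3)$ (using that $h^{\varphi_0}-\lambda$ is gapped on $\mathrm{ran}\,Q$ and $V^{\varphi_0}$ is $p^2$-bounded) and $\varphi_0 \in H^1\cap L^\infty$, both $K_{\rm pp}$ and $K_{\rm pc}$ lie in $L^2(\RRR^3\times\RRR^3)$; the scalar piece involves the singular quantity $\lsp G_x,G_y\rsp_{L^2}\propto |x-y|^{-1}$ integrated against the smoothing kernel $R(x,y)\varphi_0(x)\varphi_0(y)$ and is finite. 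Standard $N$-bounds for quadratic Fock-space operators with Hilbert--Schmidt kernels then give $\snorm{A^{\varphi_0}\eta}\le C\snorm{(N+1)\eta}$ for $\eta\in\mathscr D(N)$, hence $\mathscr D(N)\subseteq\mathscr D(N-A^{\varphi_0})$.

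\textbf{Essential self-adjointness.} Symmetry of $A^{\varphi_0}$ on $\mathscr D(N)$ is immediate from self-adjointness of $R$. I would conclude with Nelson's commutator theorem using $N+1$ as reference operator: the $N$-bound has just been proved, and the identities $[N, a^*(k)a^*(k')]=2a^*(k)a^*(k')$, $[N, a(k)a(k')]=-2a(k)a(k')$, and $[N, a^*(k)a(k')]=0$ show that $[A^{\varphi_0},N]$ is again a quadratic expression with the same kernels up to a sign and a factor of $2$, yielding the quadratic-form bound $|\lsp\eta,[A^{\varphi_0},N]\eta\rsp|\le C\snorm{(N+1)^{1/2}\eta}^2$ on finite-particle vectors. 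Nelson's theorem then delivers essential self-adjointness of $N-A^{\varphi_0}$ on any core of $N$, in particular on $\mathscr D(N)$. The main obstacle is the $L^2$-estimate on the kernels $K_{\rm pp}, K_{\rm pc}$: the form factor $G_x(k)=e^{-ikx}/(2\pi|k|)$ is not in $L^2_k$, so the Hilbert--Schmidt bound genuinely relies on the simultaneous smoothing by $R$ and decay of $\varphi_0$, i.e.\ on the gap property of $h^{\varphi_0}$ at $\lambda$ emphasized in the introduction.
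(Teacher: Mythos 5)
Your proposal is correct and its skeleton matches the paper's: the identity \eqref{eq: effective hamiltonian} is obtained exactly as in the paper from the Weyl shift relations \eqref{eq: com relations 1}--\eqref{eq: com relations 2} together with $\mathcal E^{\rm P}(\varphi_0)=\lambda+\snorm[L^2]{f_0}^2$, and the self-adjointness part rests on the same two estimates the paper uses, namely an operator bound $\snorm[\mathcal F]{A^{\varphi_0}\eta}\le C\snorm[\mathcal F]{(N+1)\eta}$ and a commutator form bound $\vert\lsp\eta,[A^{\varphi_0},N]\eta\rsp_{\mathcal F}\vert\le C\lsp\eta,(N+1)\eta\rsp_{\mathcal F}$ on finite-particle vectors, followed by the Faris--Lavine/Nelson commutator theorem. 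Where you differ is in how the $N$-bound is produced: the paper writes $\snorm[\mathcal F]{A^{\varphi_0}\eta}=\snorm[\mathscr H]{P\phi(G_x)R\phi(G_x)\varphi_0\otimes\eta}$ and estimates this directly with the Lieb--Yamazaki decomposition $G_x=\tilde G_x-p\cdot K_x+K_x\cdot p$, recycling the machinery of Lemma \ref{Lemma: a priori estimates}, so no explicit kernel computation is needed at this stage; you instead Wick-order $A^{\varphi_0}$ into a $c$-number plus quadratic pieces and prove the kernels are Hilbert--Schmidt, which essentially anticipates the structure \eqref{eq: quadratic Hamiltonian} and the Hilbert--Schmidt property of $\mathcal K$ that the paper only needs (and asserts, again via \eqref{eq: G as commutator}) in the proof of Corollary \ref{cor: main estimate bogoliubov trafo}. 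Your route buys an explicit diagonalizable quadratic form early on; the paper's route is shorter here because the relevant operator estimates are already in place for the main theorem.

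Two points in your kernel step are stated more strongly than your cited ingredients support, though they are repairable with the true regularity of $\varphi_0$. First, to get square-integrability of the kernels one needs quantitative decay such as $\snorm[L^2]{Re^{-ikx}\varphi_0}\le C(1+\vert k\vert)^{-1}$, which follows from $R(p^2+1)$ bounded plus $\varphi_0\in H^1$; this is exactly the mechanism the Lieb--Yamazaki identity encodes, so your closing remark is on target, but ``$R:L^2\to H^2$ and $\varphi_0\in H^1\cap L^\infty$'' should be unpacked along these lines rather than asserted. Second, for the $c$-number term $\int (2\pi\vert k\vert)^{-2}\snorm[L^2]{R^{1/2}e^{-ikx}\varphi_0}^2\,\D k$ the decay $(1+\vert k\vert)^{-1}$ of the integrand is only borderline (logarithmically divergent); its finiteness uses $\varphi_0\in H^2$ (true, since $\varphi_0\in\mathscr D(h^{\varphi_0})$) or again the commutator trick. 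In the paper's approach this constant never needs separate treatment for the Proposition, since the un-normal-ordered expression is bounded as a whole.
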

We prove this proposition in Section \hyperref[sec: 2.4]{2.4}. By unitarity of the Weyl operator, it follows that $H_{\alpha}^{\varphi_0}$ is self-adjoint on $\mathscr H$ and thus $\exp ( -iH_{\alpha}^{\varphi_0} t )$ defines a unitary time evolution.

Let us emphasize that the effective Hamiltonian acts nontrivially only on the phonons. This implies in particular that the time evolved state $\exp ( -iH_{\alpha}^{\varphi_0} t ) \varphi_0 \otimes \xi_\alpha$ is still an exact product. Because of the operator $A^{\varphi_0}$ in \eqref{eq: effective hamiltonian}, however, the coherent state structure of the initial state $\xi_\alpha$ is not conserved. In this regard, our effective dynamics is different compared to the known results discussed in the previous section.

\begin{remark} As a motivation of our ansatz in \eqref{eq: effective hamiltonian without Weyl operators} let us mention its analogy to the well-known second order perturbation formula 
\begin{align}
E_\varepsilon = \big\langle u_0, \big( H_\varepsilon -\varepsilon V R_0  \varepsilon V \big)  u_0 \big\rangle + O(\varepsilon^3)\quad ( \varepsilon \ll 1)
\end{align}
for the nondegenerate ground state energy $E_\varepsilon$ of a suitable Hamiltonian $H_\varepsilon  = H_0 + \varepsilon V$ by means of the ground state vector $u_0$ of $H_0$ and the reduced resolvent $R_0 = (1- \vert u_0 \rangle \langle u_0 \vert ) (H_0 - \langle u_0, H_0 u_0 \rangle)^{-1} (1- \vert u_0 \rangle \langle u_0 \vert )$. Despite this analogy, we emphasize that the expectation value in \eqref{eq: effective hamiltonian without Weyl operators} is taken only w.r.t. to the electron wave function $\varphi_0 \in L^2(\mathbb R^3,\D x)$ and not w.r.t.\ to the full Pekar product $\varphi_0 \otimes \xi_\alpha$. The reason why the expectation value w.r.t.\ $\varphi_0 \otimes \xi_\alpha$ would not lead to a good ansatz for the effective dynamics is the appearance of the factor $\alpha^{-2}$ in front of the number operator $N$.
\end{remark}

\noindent 1.3.\ \textbf{Main results.} We are now ready to state our main results.  

\begin{theorem}\label{theorem: main theorem} Let $\varphi_0 \in H^1(\mathbb R^3, \D x)$ be the unique minimizer of the Pekar functional \eqref{eq: Pekar energy functional} with $\snorm[L^2]{\varphi_0 } = 1$ and let $f_0 \in L^2(\mathbb R^3, \D k)$ be defined as in \eqref{eq: def of f}. Let further $\eta_0 \in \mathcal F$ satisfy $\snorm[\mathcal F]{\eta_0}=1$ and $\sup_{\alpha>0} \snorm[\mathcal F]{(N+1)^{5/2}\eta_0}<\infty$. Then there are constants $c,C>0$ such that
\begin{align} \label{eq: main estimate}
 \norm[\mathscr H]{ \big( e^{-iH^{\rm F }_\alpha t} - e^{-i      H^{\varphi_0}_{ \alpha } t }  \big) \varphi_0 \otimes   W(\alpha  f_0  )^* \eta_0 } \le  C \alpha^{ -1} \exp ( c  \vert t \vert \alpha^{-2} ) 
\end{align}
for all $t\in \mathbb R$ and $\alpha >0$.
\end{theorem}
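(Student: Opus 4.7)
The plan is to establish the bound via a Duhamel comparison between the full Fr\"ohlich dynamics and a corrected ansatz derived from the effective dynamics. First I would pass to the Weyl-shifted picture: conjugation by $1\otimes W(\alpha f_0)$ turns the effective Hamiltonian into $\mathcal E^{\rm P}(\varphi_0) + \alpha^{-2}(N-A^{\varphi_0})$ by Proposition \ref{prop: effective hamiltonian}, and, via the shift relation \eqref{eq: shift relations Weyl}, turns the full Hamiltonian into $\wt H^{\rm F}_\alpha = h^{\varphi_0} + \alpha^{-1}(\phi(G_x)-\phi(f_0)) + \alpha^{-2}N + \snorm[L^2]{f_0}^2$. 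Writing $\eta_\alpha(t) := e^{-it[\mathcal E^{\rm P}(\varphi_0) + \alpha^{-2}(N-A^{\varphi_0})]}\eta_0$ and using $h^{\varphi_0}\varphi_0 = \lambda\varphi_0$ with $\lambda + \snorm[L^2]{f_0}^2 = \mathcal E^{\rm P}(\varphi_0)$, a direct computation reveals that the naive ansatz $\varphi_0\otimes\eta_\alpha(t)$ solves the full Schr\"odinger equation only up to the source term $\alpha^{-1}(Q\otimes 1)\phi(G_x)\varphi_0\otimes\eta_\alpha + \alpha^{-2}\varphi_0\otimes A^{\varphi_0}\eta_\alpha$, which is too large on the time scale $|t|=O(\alpha^2)$.

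The central step is to add two correctors $\omega_1, \omega_2 \in \mathrm{ran}(Q\otimes 1)$ so that $\Psi_{\rm ans}(t) := \varphi_0\otimes\eta_\alpha(t) + \alpha^{-1}\omega_1(t) + \alpha^{-2}\omega_2(t)$ has residual $(i\partial_t - \wt H^{\rm F}_\alpha)\Psi_{\rm ans}(t)$ of order $\alpha^{-3}$. Choosing $\omega_1(t) = -(R\otimes 1)\phi(G_x)\varphi_0\otimes\eta_\alpha(t)$ removes the $O(\alpha^{-1})$ source by virtue of the identity $h^{\varphi_0}R = Q + \lambda R$, while the identity $P\phi(G_x)\omega_1 = -\varphi_0\otimes A^{\varphi_0}\eta_\alpha$ shows that the $P$-projection of the emerging $O(\alpha^{-2})$ piece exactly cancels the $\alpha^{-2}\varphi_0\otimes A^{\varphi_0}\eta_\alpha$ term. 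The remaining $Q$-projection $\alpha^{-2}(Q\otimes 1)(\phi(G_x)-\phi(f_0))\omega_1$ is then absorbed by the second corrector $\omega_2(t) = -(R\otimes 1)(\phi(G_x)-\phi(f_0))\omega_1(t)$, again through $h^{\varphi_0}R = Q + \lambda R$. What remains is a sum of $O(\alpha^{-3})$ terms such as $\alpha^{-3}N\omega_j$, $\alpha^{-3}(\phi(G_x)-\phi(f_0))\omega_j$, and $\alpha^{-3}(R\otimes 1)\phi(G_x)\varphi_0\otimes(N-A^{\varphi_0})\eta_\alpha$. Duhamel's formula then yields
\begin{align*}
\bignorm{e^{-i\wt H^{\rm F}_\alpha t}\varphi_0\otimes\eta_0 - \varphi_0\otimes\eta_\alpha(t)} \le \snorm{\Psi_{\rm ans}(0) - \varphi_0\otimes\eta_0} + \snorm{\Psi_{\rm ans}(t) - \varphi_0\otimes\eta_\alpha(t)} + \int_0^{|t|}\snorm{(i\partial_s - \wt H^{\rm F}_\alpha)\Psi_{\rm ans}(s)}\,\D s,
\end{align*}
and each piece is $O(\alpha^{-1})$ on the window $|t|=O(\alpha^2)$: the boundary terms because $\snorm{\omega_j}$ is $O(1)$ in $\alpha$, and the integral because $\alpha^{-3}\cdot O(\alpha^2) = O(\alpha^{-1})$.

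Turning this into rigorous norm estimates requires controlling operators involving the ultraviolet-singular $\phi(G_x)$. Here the restricted resolvent $R$ from \eqref{eq: def of restricted resolvent} regularizes in the electron variable, so iterated expressions like $(R\otimes 1)\phi(G_x)\varphi_0\otimes\eta$ and $(R\otimes 1)(\phi(G_x)-\phi(f_0))(R\otimes 1)\phi(G_x)\varphi_0\otimes\eta$ admit bounds of the form $C\snorm[\mathcal F]{(N+1)^{k/2}\eta}$; chasing these estimates through the residual, in particular through the worst term $\alpha^{-3}(\phi(G_x)-\phi(f_0))\omega_2$, is what produces the threshold $k=5/2$ assumed in the hypothesis.

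The main obstacle, and the source of the exponential factor $e^{c|t|\alpha^{-2}}$, is the propagation of $N$-moments along the effective flow. Since $A^{\varphi_0}$ is quadratic in creation and annihilation operators with pair-creation and pair-annihilation components coming from the $\phi(G_x)\phi(G_x)$ structure, the generator $N-A^{\varphi_0}$ does not preserve $(N+1)^k$. I would handle this by a Gr\"onwall argument applied to $\partial_t \snorm[\mathcal F]{(N+1)^k\eta_\alpha(t)}^2 = \alpha^{-2}\lsp \eta_\alpha(t), i[N-A^{\varphi_0}, (N+1)^{2k}]\eta_\alpha(t)\rsp$, using that the commutator is again quadratic in $a,a^*$ with kernel built from $\varphi_0$ and $R$ and therefore satisfies an operator bound $\pm i[N-A^{\varphi_0}, (N+1)^{2k}] \le C_k (N+1)^{2k}$. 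This yields $\snorm[\mathcal F]{(N+1)^k\eta_\alpha(t)} \le e^{c_k|t|\alpha^{-2}}\snorm[\mathcal F]{(N+1)^k\eta_0}$ for $k\le 5/2$, which combined with the residual and corrector estimates produces exactly the claimed bound $C\alpha^{-1}\exp(c|t|\alpha^{-2})$.
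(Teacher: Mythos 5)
Your overall strategy -- Weyl shift, a corrector ansatz $\Psi_{\rm ans}=\varphi_0\otimes\eta_\alpha+\alpha^{-1}\omega_1+\alpha^{-2}\omega_2$ built from the restricted resolvent, cancellation of the $P$-projected second-order term against $\alpha^{-2}A^{\varphi_0}$, and Gr\"onwall propagation of $N$-moments along the effective flow -- is essentially an adiabatic-expansion reformulation of the paper's repeated integration by parts, and those structural ingredients are sound. But there is a genuine gap at the quantitative heart of the argument: the Duhamel step in which you bound $\int_0^{|t|}\snorm{(i\partial_s-\wt H^{\rm F}_\alpha)\Psi_{\rm ans}(s)}\,\D s$ by $\alpha^{-3}\cdot O(\alpha^2)$. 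The worst residual term, $\alpha^{-3}\phi(\delta G_x)\omega_2(s)=\alpha^{-3}\phi(\delta G_x)(R\otimes1)\phi(\delta G_x)(R\otimes1)\phi(\delta G_x)\,\varphi_0\otimes\eta_\alpha(s)$ (up to sign), has its outermost $\phi(\delta G_x)$ with no $R$ or $P$ to its left. Since $G_x\notin L^2(\mathbb R^3,\D k)$, the creation part $a^*(G_x)$ applied to any nonzero vector is not a normalizable state, so this term has \emph{infinite} $\mathscr H$-norm; the claim that the resolvent "regularizes in the electron variable" only helps when $R$ (or $P$) stands to the left of the singular $\phi$, which is exactly what fails here. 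Consequently the proposed integral bound is not finite as written, and this is not a technicality one can chase through: it is the step that forces a different treatment.

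The missing idea is to keep precisely this contribution in weak (inner-product) form against the propagated full dynamics and use the Lieb--Yamazaki decomposition $G_x=\tilde G_x-p\cdot K_x+K_x\cdot p$ to transfer the momentum operator onto $e^{-iH^{\rm F}_\alpha s}\varphi_0\otimes W(\alpha f_0)^*\eta_0$, whose kinetic energy is uniformly bounded by energy conservation. This is what the paper does for the analogous term \eqref{eq: partial integration 03 line 4}, via the duality estimate \eqref{eq: a priori bound 6} together with the a priori bound \eqref{eq: time dependent a priori estimates 2}; neither ingredient (nor any substitute for it) appears in your proposal. A smaller inaccuracy: in your scheme the residual contains at most four factors of $\phi(\delta G_x)$, so the $(N+1)^{5/2}$ threshold would not arise from the term you name; in the paper it comes from the five-$\phi$ expression $R\phi R\phi R\phi A^{\varphi_0}$, cf.\ \eqref{eq: a priori bound 5}. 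With the weak-form treatment of the singular term added, your corrector approach would go through and is otherwise parallel to the paper's proof.
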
 

Since the initial state is normalized to one, the approximation is accurate for $t=O(\alpha^2)$ (indeed, it is accurate for $t \ll \alpha^2 \ln \alpha$). As a direct consequence of \eqref{eq: main estimate} together with $[e^{-i  H_{\alpha}^{\varphi_0} t} , P\otimes 1 ]=0$,
we obtain the following statement that shows that the reduced density of the electron remains approximately constant.

\begin{corollary}\label{cor: reduced density} Under the same assumptions as in Theorem \ref{theorem: main theorem} there exist constants $c,C>0$ such that 
\begin{align}
\textnormal{Tr}_{L^2}\Big\vert  \textnormal{Tr}_{\mathcal F}  \big\vert \Psi_{\alpha}(t) \big\rangle \big\langle \Psi_{\alpha}  (t) \big\vert    -  \big \vert \varphi_0 \big \rangle \big \langle   \varphi_0 \big\vert   \Big\vert \le  C \alpha^{-1} \exp ( c \vert t \vert \alpha^{-2} )
\end{align}
with $\Psi_{\alpha}(t) = e^{-iH^{\rm F}_\alpha t} \varphi_0 \otimes W(\alpha f_0)^* \eta_0$.
\end{corollary}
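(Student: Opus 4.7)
The plan is to compare $\Psi_\alpha(t)$ with the effective evolution $\Phi_\alpha(t) := e^{-iH^{\varphi_0}_{\alpha} t}\,\varphi_0 \otimes W(\alpha f_0)^*\eta_0$, exploit that the reduced electron density of $\Phi_\alpha(t)$ equals $|\varphi_0\rangle\langle\varphi_0|$ for all $t$, and then invoke Theorem~\ref{theorem: main theorem} together with two standard trace-norm inequalities.

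The main step is to show that $\Phi_\alpha(t)$ remains a product state of the form $\varphi_0 \otimes \eta_\alpha(t)$ with some normalized $\eta_\alpha(t) \in \mathcal F$. By Proposition~\ref{prop: effective hamiltonian},
\[
H^{\varphi_0}_{\alpha} = W(\alpha f_0)^*\big(\mathcal E^{\rm P}(\varphi_0)\,\id + \id\otimes \alpha^{-2}(N-A^{\varphi_0})\big)W(\alpha f_0),
\]
and since the Weyl operator acts only on the phonon factor of $\mathscr H$, the operator $H^{\varphi_0}_{\alpha}$ has the form $\id_{L^2}\otimes (\,\cdot\,)$; in particular it commutes with $P\otimes \id$ for $P=|\varphi_0\rangle\langle\varphi_0|$, and so does $e^{-iH^{\varphi_0}_{\alpha} t}$. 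Since the initial state satisfies $(P\otimes \id)\,\varphi_0 \otimes W(\alpha f_0)^*\eta_0 = \varphi_0 \otimes W(\alpha f_0)^*\eta_0$, the factorization is preserved, and unitarity fixes the norm of $\eta_\alpha(t)$. Hence $\textnormal{Tr}_{\mathcal F}|\Phi_\alpha(t)\rangle\langle\Phi_\alpha(t)| = |\varphi_0\rangle\langle\varphi_0|$.

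To conclude, I would invoke the contractivity of the partial trace in the trace norm, together with the standard bound $\||\psi\rangle\langle\psi|-|\phi\rangle\langle\phi|\|_1 \le 2\|\psi-\phi\|$ valid for any two normalized vectors $\psi,\phi\in\mathscr H$. These combine to give
\[
\textnormal{Tr}_{L^2}\Bigl|\textnormal{Tr}_{\mathcal F}|\Psi_\alpha(t)\rangle\langle\Psi_\alpha(t)|-|\varphi_0\rangle\langle\varphi_0|\Bigr| \le 2\,\|\Psi_\alpha(t)-\Phi_\alpha(t)\|_{\mathscr H},
\]
and Theorem~\ref{theorem: main theorem} completes the proof after absorbing the factor $2$ into $C$. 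There is no real analytical obstacle: the only nontrivial input is the main theorem itself, and the rest reduces to the observation that the effective Hamiltonian is trivial on the electron factor and hence leaves the reduced electron density exactly invariant.
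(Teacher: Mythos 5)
Your proposal is correct and follows essentially the same route as the paper: both exploit that $H^{\varphi_0}_\alpha$ acts trivially on the electron factor (so the reduced electron density of the effective evolution is exactly $\vert\varphi_0\rangle\langle\varphi_0\vert$), and then reduce to Theorem \ref{theorem: main theorem} via a trace-norm estimate yielding the factor $2\,\snorm[\mathscr H]{\Psi_\alpha(t)-\Phi_\alpha(t)}$. The only cosmetic difference is that the paper obtains this factor from the bound \eqref{eq: trace trace bound} of \cite{FrankG2017}, while you use contractivity of the partial trace together with the standard rank-one estimate $\snorm[1]{\ketbr{\psi}-\ketbr{\phi}}\le 2\snorm{\psi-\phi}$, which is an equivalent elementary argument.
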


Theorem \ref{theorem: main theorem} shows that on the time scale $t=O(\alpha^2)$ it is important to include the creation and annihilation of noncoherent phonons in the effective time evolution. In earlier findings which provided approximations for $t=o(\alpha)$ \cite{FrankS2014,FrankG2017} and $t=o(\alpha^2)$ \cite{Griesemer2017,LeopoldRSS2019}, respectively, it was not necessary to take such noncoherent phonons into account as the effective dynamics was still described by exact Pekar product states. In our next corollary, we use the fact that the operator $N-A^{\varphi_0}$ is quadratic in creation and annihilation operators in order to describe the fluctuations around the coherent phonons by means of a time-dependent Bogoliubov transformation. 

To make the last statement precise we need to introduce some well-known notions related to the Bogoliubov transformation. The generalized annihilation and creation operators are defined by  $A(F) = a(f) + a^*(g)$ and $A^*(F)= a^*(f) + a(g)$, respectively, for any $F = f\oplus Jg \in L^2(\mathbb R^3,\D k)\oplus L^2(\mathbb R^3,\D k)$ where $J$ denotes the complex conjugation map $(J g)(x) = \overline{g(x)}$. A bounded invertible map $\mathcal V$ on $L^2(\mathbb R^3,\D k)\oplus L^2(\mathbb R^3, \D k)$ is called a Bogoliubov map if it satisfies
\begin{align}\label{eq: generalized annihilation operator}
A^*(\mathcal V F) =  A(\mathcal V \mathcal J F) , \quad \big[ A(\mathcal V F), A^*(\mathcal V G) \big] = \lsp F,\mathcal S G\rsp_{L^2\oplus L^2}
\end{align}
for all $F,G \in  L^2(\mathbb R^3,\D k)\oplus L^2(\mathbb R^3,\D k)$ where
\begin{align}
\mathcal J = \begin{pmatrix} 0 & J\\
J & 0
\end{pmatrix} , \quad \mathcal S = \begin{pmatrix} 1 & 0 \\
0 & -1
\end{pmatrix}.
\end{align}
In case that the Bogoliubov map $\mathcal V$ is a Hilbert--Schmidt operator, i.e.\ if $\mathcal V^*\mathcal V$ is trace class, it can be implemented as a unitary operator on $\mathcal F$. This is the content of the Shale--Stinespring condition which states that there  exists a unitary operator $U_{\mathcal V} : \mathcal F \to \mathcal F$ such that
\begin{align}\label{eq: unitary implementation}
U_{\mathcal V} A(F) U_{\mathcal V}^* = A( \mathcal VF)
\end{align}
for any $F\in L^2(\mathbb R^3, \D k)\oplus L^2(\mathbb R^3,\D k)$ if and only if $\text{Tr} \mathcal V^*\mathcal V <\infty$, see e.g. \cite[Thm. 9.5]{SolovejLN2007}. We call the operator $U_{\mathcal V}$ the Bogoliubov transformation associated with the Bogoliubov map $\mathcal V$. Finally we need the concept of (pure bosonic) quasi-free states in $\mathcal F$. A quasi-free state $\eta\in \mathcal F$ is defined by the property that there is a Bogoliubov map $\mathcal V_\eta$ such that $\eta$ can be written as the transformed vacuum $\eta = U_{\mathcal V_\eta}\Omega_0$ (in particular, $\Omega_0$ is quasi-free). For a detailed introduction to Bogoliubov transformations and quasi-free states, we refer to \cite[Sec.\ 9 and 10]{SolovejLN2007}.

Our next goal is to show that the dynamics of the noncoherent phonons in $\Psi_\alpha(t) = e^{-iH^{\rm F}_\alpha t }\varphi_0 \otimes W(\alpha f_0)^* \eta_0 $ can be described by a time-dependent Bogoliubov transformation $U_{\mathcal V_\alpha(t)}$ associated with the Bogoliubov map 
\begin{align}\label{eq: def of time dep bog map}
\mathcal V_\alpha(t) = \exp\left[-\frac{it}{\alpha^2} \, \begin{pmatrix}
1- \mathcal G   & \mathcal K \\ 
-\overline{ \mathcal K } &  -1 + \overline{ \mathcal G } 
\end{pmatrix} \right]
\mathcal V_{\alpha}(0) , \quad \mathcal V_{\alpha}(0)  = \begin{pmatrix} 1 & 0 \\
0 & 1
\end{pmatrix},
\end{align}
where $\mathcal K$, $\mathcal G $ denote integral operators in   $L^2(\mathbb R^3,\D k)$ defined by the kernels
\begin{align}\label{eq: definition of K(k,l)}
\mathcal K(k,l) & =  (2\pi \vert k \vert )^{-1} (2\pi \vert l \vert )^{-1} \big\{ \lsp \varphi_0, e^{- ikx} R e^{- ilx} \varphi_0 \rsp_{L^2}  +  \lsp \varphi_0, e^{- i l x} R e^{- i k x} \varphi_0 \rsp_{L^2}\big\}  ,\\[1mm]
\label{eq: definition of G(k,l)} \mathcal G(k,l) & =  (2\pi \vert k \vert )^{-1} (2\pi \vert l \vert )^{-1} \big\{ \lsp \varphi_0, e^{+  ikx} R e^{- ilx} \varphi_0 \rsp_{L^2}  +  \lsp \varphi_0, e^{- i  k  x} R e^{ + il  x} \varphi_0 \rsp_{L^2}\big\} ,
\end{align}
and where $\overline{\mathcal K  }$, $\overline{\mathcal G }$ are to be understood as the integral operators with kernels $\overline{\mathcal K  }(k,l) = \overline{\mathcal K  (k,l)}= \mathcal K(-k,-l)$ and $\overline{\mathcal G  }(k,l) = \overline{\mathcal G  (k,l)} = \mathcal G(l,k)$, respectively.

\begin{corollary}\label{cor: main estimate bogoliubov trafo} Under the same assumptions as in Theorem \ref{theorem: main theorem} with the additional requirement that $\eta_0 \in \mathcal F$ is quasi-free, there exist constants $c,C>0$ such that 
\begin{align}
\textnormal{Tr}_{\mathcal F}\Big\vert  \textnormal{Tr}_{  L^2}  \big\vert W(\alpha f_0) \Psi_{\alpha}(t) \big\rangle \big\langle W(\alpha f_0)  \Psi_{\alpha}  (t) \big \vert    -  \big \vert U_{\mathcal V_\alpha(t)} \eta_0 \big \rangle \big \langle   U_{\mathcal V_\alpha(t)}  \eta_0 \big \vert \Big\vert \le  C \alpha^{-1} \exp ( c \vert t \vert \alpha^{-2} )
\end{align}
with $\Psi_{\alpha}(t) = e^{-iH^{\rm F}_\alpha t} \varphi_0 \otimes W(\alpha f_0)^* \eta_0$ and $U_{\mathcal V_\alpha(t)}$ the Bogoliubov transformation associated with the time-dependent Bogoliubov map $\mathcal V_\alpha(t)$ defined in \eqref{eq: def of time dep bog map}.
\end{corollary}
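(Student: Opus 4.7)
My approach combines Theorem~\ref{theorem: main theorem} and Proposition~\ref{prop: effective hamiltonian} with the standard fact that the exponential of a quadratic expression in $a,a^{*}$ implements a Bogoliubov transformation up to a scalar phase. I would first apply Theorem~\ref{theorem: main theorem} to replace $\Psi_\alpha(t)$ by the effective state $\Phi_\alpha(t):=e^{-iH^{\varphi_0}_\alpha t}\varphi_0\otimes W(\alpha f_0)^{*}\eta_0$ at the cost of the stated $\mathscr H$-norm error. Partial trace is a contraction in trace norm and the trace-norm distance between two pure states is bounded by twice their Hilbert-space distance, so this error transports directly to the trace norm of the reduced phonon densities; it therefore suffices to identify the reduced phonon density of $W(\alpha f_0)\Phi_\alpha(t)$ exactly. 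By Proposition~\ref{prop: effective hamiltonian} and the fact that $W(\alpha f_0)$ acts as the identity on the electron factor,
\begin{align*}
W(\alpha f_0)\Phi_\alpha(t)=e^{-i\mathcal E^{\rm P}(\varphi_0)t}\,\varphi_0\otimes e^{-it\alpha^{-2}(N-A^{\varphi_0})}\eta_0,
\end{align*}
so the partial trace over the electron yields the pure phonon state with vector $e^{-it\alpha^{-2}(N-A^{\varphi_0})}\eta_0$; the global phase drops out of the rank-one projector. The corollary therefore reduces to the operator identity $e^{-it\alpha^{-2}(N-A^{\varphi_0})}=e^{i\theta_\alpha(t)}\,U_{\mathcal V_\alpha(t)}$ on $\mathcal F$ for some phase $\theta_\alpha(t)\in\mathbb R$.

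To establish this identity I would insert $\phi(G_x)=a(G_x)+a^{*}(G_x)$ with $G_x(k)=e^{-ikx}/(2\pi|k|)$ into definition~\eqref{eq: def of K} and expand the four $a^{\#}a^{\#}$-contributions to $A^{\varphi_0}$. After normal-ordering the mixed terms and symmetrising in $k\leftrightarrow l$, the resulting coefficient kernels match $\mathcal K,\overline{\mathcal K},\mathcal G,\overline{\mathcal G}$ from~\eqref{eq: definition of K(k,l)}--\eqref{eq: definition of G(k,l)}, while the $c$-number produced by $[a(G_x),a^{*}(G_y)]$ contributes only to $\theta_\alpha(t)$. A direct commutator computation then shows
\begin{align*}
\bigl[\alpha^{-2}(N-A^{\varphi_0}),A(F)\bigr]=-\alpha^{-2}\,A(\mathcal B F)
\end{align*}
for every generalised annihilation operator $A(F)$, where $\mathcal B$ is precisely the matrix appearing inside the exponential in~\eqref{eq: def of time dep bog map}. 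Integrating the resulting Heisenberg equation over $[0,t]$ gives $e^{+it\alpha^{-2}(N-A^{\varphi_0})}A(F)e^{-it\alpha^{-2}(N-A^{\varphi_0})}=A(\mathcal V_\alpha(t)F)$, and uniqueness of the implementer of a Bogoliubov map up to a scalar phase (cf.~\cite[Sec.~9--10]{SolovejLN2007}) delivers the operator identity. Applying both sides to $\eta_0$ and combining with the first step completes the argument.

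The main obstacle lies in this identification step: one must correctly track the signs, complex conjugations and $k\leftrightarrow l$-symmetrisations in the four $a^{\#}a^{\#}$-kernels so that they reproduce the matrix $\mathcal B$ of~\eqref{eq: def of time dep bog map}, and one must verify the Shale--Stinespring condition for all~$t$ so that $U_{\mathcal V_\alpha(t)}$ actually exists as a unitary on~$\mathcal F$. The latter will follow from the boundedness of $R$ together with the smoothing by the $|k|^{-1}|l|^{-1}$ weights in $\mathcal K,\mathcal G$, which keep the off-diagonal block of $\mathcal V_\alpha(t)$ Hilbert--Schmidt uniformly in~$t$. The quasi-free hypothesis on $\eta_0$ is not needed for the trace-norm estimate itself; it is used only to interpret $U_{\mathcal V_\alpha(t)}\eta_0$ as a concrete quasi-free state and to supply natural candidates satisfying the regularity requirement $\sup_\alpha\|(N+1)^{5/2}\eta_0\|<\infty$ of Theorem~\ref{theorem: main theorem}.
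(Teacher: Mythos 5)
Your reduction of the corollary is the same as the paper's: Theorem~\ref{theorem: main theorem}, unitarity of $W(\alpha f_0)$, Proposition~\ref{prop: effective hamiltonian} to rewrite $W(\alpha f_0)e^{-iH^{\varphi_0}_\alpha t}\varphi_0\otimes W(\alpha f_0)^*\eta_0$ as $\varphi_0\otimes e^{-it\alpha^{-2}(N-A^{\varphi_0})}\eta_0$ up to a phase, and a trace-norm/partial-trace bound (the paper uses \eqref{eq: trace trace bound}, your ``contraction plus $2\snorm{\Psi-\Phi}$'' bound does the same job). Where you genuinely diverge is the key identification. The paper proves only the vector identity \eqref{eq: bogoliubov identity}: it normal-orders $N-A^{\varphi_0}+\varepsilon$ into the quadratic form \eqref{eq: quadratic Hamiltonian}, invokes \cite[Prop.~7]{NamN2017} to know that the evolution preserves quasi-freeness and that the one-body densities obey a uniquely solvable evolution equation, and then identifies $e^{-it\alpha^{-2}(N-A^{\varphi_0}+\varepsilon)}\eta_0$ with $U_{\mathcal V_\alpha(t)}\eta_0$ by matching reduced one-body density matrices --- this is exactly where the quasi-free hypothesis on $\eta_0$ is used. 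You instead aim at the stronger operator identity $e^{-it\alpha^{-2}(N-A^{\varphi_0})}=e^{i\theta_\alpha(t)}U_{\mathcal V_\alpha(t)}$ via conjugation of $A(F)$ and uniqueness of implementers up to a phase. That route is legitimate and, if completed, more general (it would indeed make the quasi-free assumption superfluous, as you observe); but be aware that ``integrating the Heisenberg equation'' for the unbounded quadratic generator is precisely the technical content that the paper outsources to \cite{NamN2017}: you must justify the differentiation and the integration on a suitable dense, invariant domain (e.g.\ weakly on states with number bounds, via a Gr\"onwall argument), before the uniqueness statement of \cite[Sec.~9]{SolovejLN2007} can be applied.

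Two concrete points need repair. First, your commutator $[\alpha^{-2}(N-A^{\varphi_0}),A(F)]=-\alpha^{-2}A(\mathcal A F)$ (with $\mathcal A$ the matrix in \eqref{eq: def of time dep bog map}) is correct once the antilinearity of $F\mapsto A(F)$ is taken into account, but integrating it gives $e^{-itH}A(F)e^{+itH}=A(\mathcal V_\alpha(t)F)$ with $H=\alpha^{-2}(N-A^{\varphi_0})$, not $e^{+itH}A(F)e^{-itH}=A(\mathcal V_\alpha(t)F)$ as you wrote; taken literally, your chain identifies $e^{-itH}\eta_0$ with $U_{\mathcal V_\alpha(t)}^{*}\eta_0$ rather than $U_{\mathcal V_\alpha(t)}\eta_0$. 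The corrected direction is exactly what the corollary needs, so this is bookkeeping, not structure --- but it is the kind of sign you flagged and it must come out right. Second, your justification of the Shale--Stinespring condition is insufficient as stated: boundedness of $R$ together with the weights $\vert k\vert^{-1}\vert l\vert^{-1}$ only gives $\vert\mathcal K(k,l)\vert\le C\vert k\vert^{-1}\vert l\vert^{-1}$, and $\int_{\vert k\vert\ge 1}\vert k\vert^{-2}\,\D k=\infty$ in three dimensions, so this does not make $\mathcal K$ Hilbert--Schmidt. One needs the additional large-momentum decay coming from the Lieb--Yamazaki decomposition \eqref{eq: G as commutator} (equivalently, boundedness of $Rp$ and $p\varphi_0\in L^2$), which is exactly how the paper verifies $\mathcal K\in\mathrm{HS}$ before applying \cite[Prop.~7]{NamN2017}. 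With these two repairs and a rigorous treatment of the Heisenberg-picture integration, your argument closes and yields the same bound.
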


The remainder of this note is organized as follows. We conclude section one with a short remark about the notation and a sketch of the proof of Theorem \ref{theorem: main theorem}. In the second section we begin by stating two preliminary lemmas which are useful for the proof of Theorem \ref{theorem: main theorem}. The latter is given in Section \hyperref[sec: 2.2]{2.2} whereas the preliminary lemmas are proved in Section \hyperref[sec: 2.3]{2.3}.  Finally we prove Propostion \ref{prop: effective hamiltonian} together with Corollaries \ref{cor: reduced density} and \ref{cor: main estimate bogoliubov trafo} in Section \hyperref[sec: 2.4]{2.4}.\medskip

\noindent 1.4. \textbf{Notation.} From now on, we omit the tensor product with the identity in operators of the form $h^{\varphi_0} = h^{\varphi_0}\otimes 1$ and $N = 1 \otimes N$. Moreover we make use of the abbreviation
\begin{align}
\delta G_x = G_x - f_0,
\end{align}
with $f_0$ defined as in \eqref{eq: def of f} and by $\varphi_0\in H^1(\mathbb R^3,\D x)$ we always denote the ground state of the Pekar functional \eqref{eq: Pekar energy functional} satisfying $\snorm[L^2]{\varphi_0}=1$. The letter $C$ is used for positive constants that are independent of $t$ and $\alpha$. The exact value of $C$ may vary from line to line.\medskip

\noindent 1.5.\ \textbf{Sketch of the proof.} The proof of Theorem \ref{theorem: main theorem} is motivated mainly by the proof of inequality \eqref{eq: main estimate Griesemer} given in \cite{Griesemer2017}. To demonstrate our main idea it is instructive to start with a sketch of the derivation of \eqref{eq: main estimate Griesemer} (in slightly different way compared to \cite{Griesemer2017}). To this end, we use the shift relation \eqref{eq: shift relations Weyl} to verify
\begin{align}
W(\alpha f_0 ) H^{\rm F}_\alpha W(\alpha f_0 )^* -  \mathcal E^{\rm P}(\varphi_0) = h^{\varphi_0} - \lambda + \alpha^{-2} N + \alpha^{-1} \phi(\delta G_x).
\end{align}
With $ W(\alpha f_0 ) e^{ -iH^{\rm F}_\alpha t  }  W(\alpha f_0 )^*  =  \exp(-i W(\alpha f_0 ) H^{\rm F}_\alpha   W(\alpha f_0 )^* t )$ and by Duhamel's principle, one then obtains 
\begin{align}
& \norm[\mathscr H]{  \big( e^{-i H^{\rm F}_\alpha  t }   - e^{-i  \mathcal E^{\rm P}(\varphi_0)  t} \big) \varphi_0 \otimes W(\alpha f_0)^* \Omega_0   }^2  \nonumber \\[1mm]
& \hspace{1cm} = - 2 \alpha^{-1} \re  \ \int_0^t  i \lsp  e^{ -  i   ( h^{\varphi_0} - \lambda  + \alpha^{-2} N + \alpha^{-1} \phi(\delta G_x) )  s }     \varphi_0 \otimes \Omega_0 ,  Q  \phi( \delta G_x)  \varphi_0 \otimes \Omega_0 \rsp_{\mathscr H} \D s . \label{eq: proof sketch 01}
\end{align}
Note that we further used $(h-\lambda)\varphi_0 =0$ and $\phi( \delta G_x)  \varphi_0 \otimes \Omega_0 =  Q \phi( \delta G_x)  \varphi_0 \otimes \Omega_0 $ which holds because of $ \langle \varphi_0, \delta G_x \varphi_0 \rangle_{L^2} =0$ (recall $P= \vert \varphi_0 \rangle \langle \varphi_0\vert$ and $Q=1-P$). A rough estimate of the right side would now lead to an upper bound proportional to $\vert t \vert \alpha^{-1}$. The reason why the right side behaves actually better than this is a phase inside the integral which oscillates with nonzero ($\alpha$-independent) frequency.\footnote{One should think of the improved $t$-dependence in $\int_0^t i e^{i b s} \D s = b^{-1} (e^{ i b t}-1)$ compared to $\int_0^t 1 \D s =t$.} To take advantage of this phase we rewrite the integrand as
\begin{align}
  \lsp  e^{i(h^{\varphi_0 } -\lambda) s  }   e^{ -  i   ( h^{\varphi_0} - \lambda + \alpha^{-2} N + \alpha^{-1} \phi(\delta G_x) )  s }       \varphi_0 \otimes \Omega_0  ,  \big( \frac{d}{ds}e^{i(h^{\varphi_0 } -\lambda) s } R \big)  \phi( \delta G_x  ) \varphi_0 \otimes  \Omega_0 \rsp_{\mathscr H}
\end{align}
and then integrates by parts. This leads to a perturbation like expansion of \eqref{eq: proof sketch 01} which among other contributions (e.g. the boundary terms which are of order $\alpha^{-1}$) includes the term
\begin{align}
2 \alpha^{-2} \re \int_0^t i   \lsp  e^{ -  i   ( h^{\varphi_0} - \lambda  + \alpha^{-2} N + \alpha^{-1} \phi(\delta G_x) )  s }       \varphi_0 \otimes \Omega _0 ,  \phi(  \delta G_x )  R \phi(  \delta G_x )  \varphi_0 \otimes  \Omega_0 \rsp_{\mathscr H}\, \D s. \label{eq: proof sketch line 02}
\end{align}
Apart from some technical difficulties being related to $G_x\notin L^2(\mathbb R^3, \D k)$, one then applies the estimate (here we use that $R$ is uniformly bounded)
\begin{align}
\big \vert  \lsp  e^{ -  i   ( h^{\varphi_0} - \lambda + \alpha^{-2} N + \alpha^{-1} \phi(\delta G_x) )  s }       \varphi_0 \otimes \Omega_0  ,  \phi(  \delta G_x )  R \phi(  \delta G_x )  \varphi_0 \otimes  \Omega_0 \rsp_{\mathscr H}\vert \le C \norm[\mathcal F]{(N+1) \Omega_0}
\end{align}
in order to arrive at $\vert \eqref{eq: proof sketch line 02} \vert \le C \alpha^{-2}\vert t \vert$. This bound is indeed the reason why \eqref{eq: main estimate Griesemer} is limited to $t=o(\alpha^2)$. Our idea to improve upon this is to use the oscillating phase in \eqref{eq: proof sketch line 02} a second time. Inserting the identity $1= P+Q$ on the left of $\phi(\delta G_x)$ we obtain two contributions,
\begin{align*} 
(\ref{eq: proof sketch line 02}.\textnormal{a}) & =  2\alpha^{-2} \int_0^t  \re i \lsp    e^{ -  i   ( h^{\varphi_0} - \lambda) + \alpha^{-2} N + \alpha^{-1} \phi(\delta G_x) )  s }    \varphi_0 \otimes \Omega_0,  Q \phi(   \delta G_x )  R \phi(  \delta G_x  )  \varphi_0 \otimes \Omega_0 \rsp_{\mathscr H}\, \D s,\\
(\ref{eq: proof sketch line 02}.\textnormal{b}) & =  2\alpha^{-2} \int_0^t \re i \lsp    e^{ -  i   ( h^{\varphi_0} - \lambda) + \alpha^{-2} N + \alpha^{-1} \phi(\delta G_x) )  s }    \varphi_0 \otimes \Omega_0 , P \phi(  \delta G_x )   R \phi(  \delta G_x )   \varphi_0 \otimes \Omega_0 \rsp_{\mathscr H}\, \D s.
\end{align*}
In the first one we can proceed similarly as before and improve the bound by partial integration to $\vert (\ref{eq: proof sketch line 02}.\textnormal{a}) \vert \le C( \alpha^{-2} + \vert t\vert \alpha^{-3})$. In the second line, however, the partial integration is not applicable since $(h^{\varphi_0}-\lambda)P=0$. In other words, there is no fast oscillating phase in this term and thus (\ref{eq: proof sketch line 02}.\textnormal{b}) seems to be really of order $\vert t \vert \alpha^{-2}$. To avoid this term in the first place we include the operator $W(\alpha f_0)^* \alpha^{-2} (N-A^{\varphi_0}) W(\alpha f_0 )$ into the effective dynamics, see \eqref{eq: effective hamiltonian}. Starting over again with the new effective dynamics we now obtain an additional term in the first-order Duhamel expansion which cancels exactly the contribution from (\ref{eq: proof sketch line 02}.\textnormal{b}), cf.\ \eqref{eq: def of h(t,alpha)} and \eqref{eq: second order P term}. Because of the nontrivial dynamics of the phonons we now have to take into account the number of excitations in the effective time evolution. Using a Gronwall argument, this is shown to be bounded by a constant times $\exp(c\vert t \vert \alpha^{-2})$ which leads to the exponential factor in \eqref{eq: main estimate}. This already explains much of our proof and aside from the technical details, it would lead to an upper bound in \eqref{eq: main estimate} that is proportional to $\alpha^{-1/2} \exp(\vert t \vert \alpha^{-2})$. By a third partial integration we can improve the accuracy of this upper bound further and finally arrive at the stated bound in \eqref{eq: main estimate}.

\begin{remark}\label{eq: remark about tracer particle} The described idea of improving the approximation to longer times by changing the effective Hamiltonian as in \eqref{eq: effective hamiltonian} was similarly used also in \cite{JeblickMPP2017,JeblickMP2018}. These works treat very different models, namely the dynamics of a single tracer particle resp.\ two tracer particles interacting with an ideal Fermi gas in the high density limit. The used approximations and the proofs of their accuracy, however, are completely analogous to the one we apply to the Fr\"ohlich Polaron. The scale separation in these models comes from the large momenta of the gas modes that are close to the Fermi surface (for the ideal Fermi gas, high density is equivalent to a large Fermi momentum).
\end{remark}
\vspace{-4mm}

\section{Proofs}

\noindent \label{sec 2.1}2.1.\ \textbf{Preliminary Lemmas.} Before we start with the proof of Theorem \ref{theorem: main theorem}, let us state two lemmas with several helpful estimates. Their proofs are postponed to Section \hyperref[Sec: remaining proofs]{2.3}.
\allowdisplaybreaks
\begin{lemma}\label{Lemma: a priori estimates} Let $P=\vert \varphi_0 \rangle \langle \varphi_0 \vert$ and $R$ as defined in \eqref{eq: def of restricted resolvent}. There is a constant $C>0$ such that for any $\Psi = \varphi_0 \otimes \eta \in \mathscr H$ with $\eta \in \mathscr D(N^{5/2})$, the following bounds hold.
\begin{align}
\norm[\mathscr H]{  R  \phi (\delta G_x  )  \Psi  } + \norm[\mathscr H]{ R \big [N, \phi(\delta G_x ) \big]   \Psi  }  & \le C   \norm[\mathcal F ]{ (  N + 1 )^{ 1 / 2 }  \eta }, \label{eq: a priori bound 1}\\[3.5mm]
 \norm[\mathscr H]{  R \phi (\delta G_x )  R \phi (\delta G_x )  \Psi  } \nonumber \\[1mm]
+ \norm[\mathscr H]{  R \big[N, \phi(\delta G_x  ) R  \phi(\delta G_x ) \big]  \Psi  }   & \le C   \norm[\mathcal F ]{ (  N + 1 ) \eta }, \label{eq: a priori bound 2}\\[3.5mm]
\norm[\mathscr H]{  R   \phi (\delta G_x )  R  \phi (\delta G_x )  R  \phi (\delta G_x )  \Psi  } & \nonumber \\[1mm] 
+ \norm[\mathscr H]{  P   \phi (\delta G_x )  R  \phi (\delta G_x )  R  \phi (\delta G_x )  \Psi  } & \nonumber \\[1mm] 
+ \norm[\mathscr H]{R \phi(\delta G_x) P  \phi(\delta G_x) R \phi(\delta G_x)  \Psi} &  \nonumber\\[1mm]
+ \norm[\mathscr H]{  R \big[   \phi (\delta G_x )  R  \phi (\delta G_x )  R  \phi (\delta G_x  ) \big]  \Psi  } & \le C   \norm[\mathcal F ]{ (  N + 1 )^{3 / 2 }  \eta }, \label{eq: a priori bound 3}\\[3.5mm]
\norm[\mathscr H]{  R \phi(\delta G_x) R \phi(\delta G_x) P  \phi(\delta G_x) R \phi(\delta G_x)   \Psi  }   & \le C   \norm[\mathcal F  ]{ (  N + 1 )^{ 2 }  \eta }, \label{eq: a priori bound 4}\\[3.5mm]
\norm[\mathscr H]{  R \phi(\delta G_x)R \phi(\delta G_x)  R \phi(\delta G_x)  P  \phi(\delta G_x) R \phi(\delta G_x)  \Psi  }   & \le C   \norm[\mathcal F  ]{ (  N + 1 )^{ 5/2 }  \eta  } \label{eq: a priori bound 5}.
\end{align}
Moreover for $\Phi \in H^1(\mathbb R^3,\D x) \otimes \mathcal F$ we have  
\begin{align} 
  \big\vert \big\langle \Phi,  \phi( \delta G_x )R \phi( \delta G_x  ) R \phi( \delta G_x  )  R \phi( \delta G_x ) \Psi  \big\rangle_{\mathscr H}  \big\vert  & \le C \big( 1 + \norm[\mathscr H]{p\Phi }\big) \norm[\mathcal F ]{ (  N+1)^{2}  \eta}. \label{eq: a priori bound 6}
\end{align}
\end{lemma}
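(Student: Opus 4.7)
The plan is to reduce all estimates to two fundamental ingredients and then iterate. First I would establish a Lieb--Yamazaki--type form bound: for $\Psi$ in a suitable dense domain,
\begin{align*}
\|\phi(\delta G_x)\Psi\|_{\mathscr H}\le C\|(1+|p|)(N+1)^{1/2}\Psi\|_{\mathscr H},
\end{align*}
together with its $N$-weighted variants $\|(N+1)^{k/2}\phi(\delta G_x)\Psi\|\le C\|(1+|p|)(N+1)^{(k+1)/2}\Psi\|$ for $k\ge 0$, and the identical bound for the commutator $[N,\phi(\delta G_x)]=a^*(\delta G_x)-a(\delta G_x)$. The structural point is that $\delta G_x(k)=(e^{-ikx}-\hat{|\varphi_0|^2}(k))/(2\pi|k|)$ has numerator $O(|k|\langle x\rangle)$ near $k=0$, curing the $1/|k|$ singularity at the origin (which would otherwise keep $\delta G_x$ out of $L^2_k$), and $O(1)$ at large $|k|$, with the remaining $1/|k|$ decay absorbed by one factor of $(1+|p|)$ on the electron via the identity $ke^{-ikx}=i\nabla_x e^{-ikx}$ (integration by parts in $x$). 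Second, since $R=Q(p^2+V^{\varphi_0}-\lambda)^{-1}Q$ with $V^{\varphi_0}=-2(|\varphi_0|^2\ast|\cdot|^{-1})$ bounded, the resolvent identity $p^2R=Q+\lambda R-V^{\varphi_0}R$ gives $\|p^jR\|_\mathrm{op}<\infty$ for $j\in\{0,1,2\}$, and clearly $[R,N]=0$.

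Given these two ingredients, the bounds \eqref{eq: a priori bound 1}--\eqref{eq: a priori bound 3} (first line) follow by an iterative scheme. To estimate $\|R\phi(\delta G_x)(\cdots)\|$, bound by $\|R\|\cdot\|\phi(\delta G_x)(\cdots)\|$, apply Lieb--Yamazaki, then use $[R,N]=0$ and $\|(1+|p|)R\|_\mathrm{op}<\infty$ to push the $(1+|p|)(N+1)^{1/2}$ weight through any inner $R$, absorbing the $(1+|p|)$ at the cost of one additional factor of $(N+1)^{1/2}$. Iterated $j$ times on a $j$-fold product $R\phi(\delta G_x)\cdots R\phi(\delta G_x)$ applied to $\varphi_0\otimes\eta$, this produces exactly the bound $C\|(N+1)^{j/2}\eta\|$. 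The commutator estimates follow by Leibniz: $[N,\phi(\delta G_x)R\phi(\delta G_x)\cdots]$ is a sum of terms in which one $\phi(\delta G_x)$ is replaced by $[N,\phi(\delta G_x)]$, which satisfies the same Lieb--Yamazaki bound, so each summand is controlled identically. For the $P$-insertions in \eqref{eq: a priori bound 3}--\eqref{eq: a priori bound 5}, the key observation is $\|(1+|p|)P\|=\|(1+|p|)\varphi_0\|_{L^2}<\infty$ since $\varphi_0\in H^1$; thus $P$ can absorb one $(1+|p|)$-factor produced by Lieb--Yamazaki on an adjacent $\phi(\delta G_x)$, playing the role of an $R$ in the power-counting and explaining why the absence of an $R$ at that position does no harm.

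For \eqref{eq: a priori bound 6} I would apply duality, writing
\begin{align*}
\langle\Phi,\phi(\delta G_x)R\phi(\delta G_x)R\phi(\delta G_x)R\phi(\delta G_x)\Psi\rangle = \langle\phi(\delta G_x)\Phi,R\phi(\delta G_x)R\phi(\delta G_x)R\phi(\delta G_x)\Psi\rangle,
\end{align*}
and then Cauchy--Schwarz: the right factor is bounded by $C\|(N+1)^{3/2}\eta\|$ via \eqref{eq: a priori bound 3}, while Lieb--Yamazaki on the left factor produces the $(1+\|p\Phi\|)$-dependence (together with a Fock factor absorbed into $C$ in the applications). The main technical obstacle is the rigorous derivation of the base Lieb--Yamazaki form bound, since $\phi(G_x)$ is defined only as a quadratic form ($G_x\notin L^2(\mathrm dk)$) and the integration-by-parts in $x$ that converts the $1/|k|$ tail into a factor of $(1+|p|)$ must be executed on the correct dense domain and correctly coupled with the standard $a^\sharp$-bounds on the Fock side. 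Once this bound and its $N$-weighted variants are in hand, all the estimates \eqref{eq: a priori bound 1}--\eqref{eq: a priori bound 6} reduce to the single iteration rule above.
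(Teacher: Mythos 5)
Your overall strategy---the Lieb--Yamazaki decomposition \eqref{eq: G as commutator}, absorption of momentum factors through the $H^1$-regularity of $\varphi_0$ and the boundedness of $pR$, standard $a^{\#}$-bounds for the $N$-power counting, Leibniz for the $N$-commutators, and $\Vert (1+\vert p\vert)P\Vert<\infty$ for the $P$-insertions---is the paper's. But the base estimate your iteration funnels through, $\Vert \phi(\delta G_x)\Psi\Vert_{\mathscr H}\le C\Vert (1+\vert p\vert)(N+1)^{1/2}\Psi\Vert_{\mathscr H}$ (and its $N$-weighted variants), is false, and this is a genuine gap rather than a technicality. Since $G_x\chi_{(1,\infty)}(\vert k\vert)\notin L^2(\mathbb R^3,\D k)$ and $\Vert a^*(f)\eta\Vert^2=\Vert a(f)\eta\Vert^2+\Vert f\Vert_{L^2}^2\Vert\eta\Vert^2$, the creation part of $\phi(\delta G_x)$ diverges on any nonzero vector: already for $\Psi=\varphi_0\otimes\Omega_0$ the one-phonon component of $\phi(\delta G_x)\Psi$ has squared norm $\int \vert\varphi_0(x)\vert^2\,\Vert\delta G_x\Vert_{L^2}^2\,\D x=\infty$, while the right-hand side of your bound is finite. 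No regularity of the vector on which $\phi(\delta G_x)$ acts can cure this, because the obstruction is purely ultraviolet (incidentally, $G_x\chi_{[0,1]}(\vert k\vert)$ is already square-integrable in three dimensions, so the subtraction of $f_0$ is not what rescues the infrared region): the Lieb--Yamazaki identity trades the tail for a momentum operator standing to the \emph{left} of $a^{\#}(K_x)$, and commuting that $p$ back to the right regenerates exactly the divergent object, since $k\cdot K_x(k)=G_x(k)\chi_{(1,\infty)}(\vert k\vert)$. Consequently your first move, $\Vert R\phi(\delta G_x)(\cdots)\Vert\le\Vert R\Vert\,\Vert\phi(\delta G_x)(\cdots)\Vert$, and every later intermediate quantity of the form $\Vert (N+1)^{k/2}\phi(\delta G_x)(\cdots)\Vert$ is $+\infty$; the same defect enters your duality step for \eqref{eq: a priori bound 6}, where $\phi(\delta G_x)\Phi\notin\mathscr H$.

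The repair is precisely the paper's procedure and is compatible with your power counting: never detach the regularizer standing to the left of a $\phi(\delta G_x)$. Apply \eqref{eq: G as commutator} with the flanking operator kept in place, splitting $R=R^{1/2}R^{1/2}$ in the iterated expressions, so that the $p$ produced by $p\cdot a^{\#}(K_x)$ is absorbed via $\Vert Rp\Vert$, $\Vert pR^{1/2}\Vert$, $\Vert pP\Vert=\Vert p\varphi_0\Vert_{L^2}$, or, for \eqref{eq: a priori bound 6}, via the pairing $\lsp p\Phi , a^{\#}(K_x)(\cdots)\rsp$, while the $p$ from $a^{\#}(K_x)\cdot p$ is absorbed by the next $R^{1/2}$, $P$ or $\varphi_0$ to its right; with this bookkeeping the $(N+1)^{1/2}$ counting runs exactly as you describe and yields the stated powers. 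Your auxiliary ingredients survive the correction: $[N,R]=0$, $\Vert(1+\vert p\vert)P\Vert<\infty$, and $\Vert pR\Vert<\infty$ (the paper obtains the latter from the form bound $\pm V^{\varphi_0}\le\tfrac12 p^2+C$; your route via $p^2R=Q+\lambda R-V^{\varphi_0}R$ is fine provided you actually justify $V^{\varphi_0}\in L^\infty$, e.g.\ from $\vert\varphi_0\vert^2\in L^1\cap L^3$).
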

\begin{lemma} \label{Lemma: a priori estimates 2} Let $\eta \in   \mathscr D (N^{5/2})$ with $\vert \vert \eta \vert \vert_{\mathcal F}=1$ and $\sup_{\alpha >0}\snorm[\mathcal F ] { (N+1)^{5/2} \eta_0 } < \infty$. Then there are constants $c,C>0$ such that
\begin{align}\label{eq: time dependent a priori estimates 1}
\sum_{j=0}^5 \norm[\mathcal F]{  (  N+1)^{j/2} \exp( -i \alpha^{-2}(N-A^{\varphi_0 }) t ) \eta }^2 &  \le C \exp(c \vert t \vert \alpha^{-2} ), \\[0mm]
\label{eq: time dependent a priori estimates 2}
\norm[\mathscr H ]{  p\, e^{-i H^{\rm F}_\alpha t}  \varphi_0 \otimes  W(\alpha f_0)^* \eta } &  \le C 
\end{align}
for all $t\in \mathbb R$ and $\alpha>0$.
\end{lemma}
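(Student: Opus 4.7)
My plan is to prove the two estimates by independent arguments: \eqref{eq: time dependent a priori estimates 1} via a Gr\"onwall inequality for moments of the number operator, and \eqref{eq: time dependent a priori estimates 2} via conservation of the Fr\"ohlich energy combined with a Lieb--Yamazaki form bound.

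For \eqref{eq: time dependent a priori estimates 1} I would set $\eta_t := e^{-i\alpha^{-2}(N-A^{\varphi_0})t}\eta$ and $F_j(t) := \langle \eta_t, (N+1)^j \eta_t\rangle_{\mathcal F}$. Because $N$ commutes with $(N+1)^j$, differentiation in $t$ gives
\begin{align*}
\dot F_j(t) \;=\; -i\alpha^{-2}\langle \eta_t, [(N+1)^j, A^{\varphi_0}]\eta_t\rangle .
\end{align*}
The key structural observation is that $A^{\varphi_0}$ is at most quadratic in creation and annihilation operators: substituting $\phi(G_x)=a(G_x)+a^*(G_x)$ into \eqref{eq: def of K} decomposes $A^{\varphi_0}$ into a pair-creation part $A^{++}$, its adjoint $A^{--}$, a number-preserving part, and a scalar. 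The non-scalar parts have integral kernels built from $\mathcal K(k,l)$ and $\mathcal G(k,l)$ of \eqref{eq: definition of K(k,l)}--\eqref{eq: definition of G(k,l)}; these are Hilbert--Schmidt thanks to the regularity of $\varphi_0\in H^1$ and the boundedness of $R$ (this is essentially the content of the estimates in Lemma~\ref{Lemma: a priori estimates}, compare \eqref{eq: a priori bound 2}). A standard calculation for such quadratic second-quantized operators then yields the form bound
\begin{align*}
\pm [(N+1)^j, A^{\varphi_0}] \;\le\; C_j\,(N+1)^j \qquad \text{on } \mathscr D(N^{j/2}),
\end{align*}
so $|\dot F_j(t)| \le C\alpha^{-2} F_j(t)$. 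Gr\"onwall's inequality gives $F_j(t)\le F_j(0)\exp(C\alpha^{-2}|t|)$, and the hypothesis $\sup_\alpha\|(N+1)^{5/2}\eta\|_{\mathcal F}<\infty$ controls $F_j(0)$ uniformly in $\alpha$; summing over $j=0,\ldots,5$ yields \eqref{eq: time dependent a priori estimates 1}.

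For \eqref{eq: time dependent a priori estimates 2} I would exploit conservation of $H^{\rm F}_\alpha$ along its own dynamics, i.e.\ $\langle \Psi_\alpha(t), H^{\rm F}_\alpha\Psi_\alpha(t)\rangle = \langle\Psi_\alpha(0), H^{\rm F}_\alpha\Psi_\alpha(0)\rangle$. The shift relation \eqref{eq: shift relations Weyl}, the Euler--Lagrange equation $(h^{\varphi_0}-\lambda)\varphi_0=0$, and the identity $\lambda=\mathcal E^{\rm P}(\varphi_0)-\|f_0\|_{L^2}^2$ combine as in the sketch of Section~1.5 to give
\begin{align*}
W(\alpha f_0) H^{\rm F}_\alpha W(\alpha f_0)^* \,-\, \mathcal E^{\rm P}(\varphi_0) \;=\; (h^{\varphi_0}-\lambda) \,+\, \alpha^{-1}\phi(\delta G_x) \,+\, \alpha^{-2} N .
\end{align*}
Evaluated on $\varphi_0\otimes\eta$, the first summand vanishes by the Euler--Lagrange equation, the coherent cross term vanishes because $\langle\varphi_0, \delta G_x(k)\varphi_0\rangle_{L^2(dx)} = f_0(k)-f_0(k)=0$, and the third contributes $\alpha^{-2}\langle\eta, N\eta\rangle$, so the initial energy equals $\mathcal E^{\rm P}(\varphi_0) + \alpha^{-2}\langle\eta, N\eta\rangle$. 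A standard Lieb--Yamazaki form bound on $\phi(G_x)$ gives $H^{\rm F}_\alpha \ge \tfrac{1}{2}p^2 - C$ with $C$ independent of $\alpha$ in the relevant regime, so energy conservation implies
\begin{align*}
\tfrac{1}{2}\|p\,\Psi_\alpha(t)\|_{\mathscr H}^2 \;\le\; \langle\Psi_\alpha(t), H^{\rm F}_\alpha\Psi_\alpha(t)\rangle + C \;=\; \mathcal E^{\rm P}(\varphi_0) + \alpha^{-2}\langle\eta, N\eta\rangle + C,
\end{align*}
which is uniformly bounded in $t$ and $\alpha$.

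The main obstacle I anticipate is the commutator estimate $\pm [(N+1)^j, A^{\varphi_0}]\le C_j(N+1)^j$ for $j$ up to $5$. Although routine for quadratic Bogoliubov-type operators in abstract terms, making it rigorous here requires verifying that the singular $|k|^{-1}|l|^{-1}$ prefactors in $\mathcal K,\mathcal G$ are compensated by sufficient decay of the matrix elements $\langle\varphi_0, e^{\pm ikx}R e^{\pm ilx}\varphi_0\rangle$, so that the kernels are genuinely Hilbert--Schmidt, and then propagating these operator bounds through higher powers of $N$ by careful sector-by-sector bookkeeping on Fock space. The necessary decay is precisely what underlies Lemma~\ref{Lemma: a priori estimates}, so the ingredients are at hand.
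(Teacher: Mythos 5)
Your proposal is correct and follows essentially the same route as the paper: for \eqref{eq: time dependent a priori estimates 1} a Gronwall argument on the moments $\langle\eta_t,(N+1)^j\eta_t\rangle$, with the commutator $[(N+1)^j,A^{\varphi_0}]$ controlled through the quadratic (pair-creation/annihilation) structure of $A^{\varphi_0}$ and the Lieb--Yamazaki-type bounds of Lemma \ref{Lemma: a priori estimates}, which is exactly the paper's estimate \eqref{eq: bound for time deriv of z(t)}; and for \eqref{eq: time dependent a priori estimates 2} conservation of the Fr\"ohlich energy, the shift relation \eqref{eq: com relation for HF} with $\langle\varphi_0,\delta G_x\varphi_0\rangle_{L^2}=0$ to evaluate the initial energy, and the operator inequality $p^2+\alpha^{-2}N\le C(H^{\rm F}_\alpha+1)$ (your lower bound $H^{\rm F}_\alpha\ge\tfrac12 p^2-C$ is the same input). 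The only cosmetic point is that the form bound should be stated for $\pm i[(N+1)^j,A^{\varphi_0}]$ (the commutator of two symmetric operators is anti-symmetric), which is what your Gronwall step actually uses.
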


\noindent \label{sec: 2.2}2.2.\ \textbf{Proof of Theorem \ref{theorem: main theorem}.}\label{Sec: proof of main theorem} We recall the relations
\begin{align}
W(\alpha f_0 ) H^{\rm F}_\alpha W(\alpha f_0 )^* - \mathcal E^{\rm P}(\varphi_0) & = h^{\varphi_0} -\lambda   + \alpha^{-2}N  + \alpha^{-1}\phi(\delta G_x  ),\label{eq: com relation for HF}\\[1mm]
W(\alpha f_0 )  H^{\varphi_0 }_\alpha W(\alpha f_0 )^* - \mathcal E^{\rm P}(\varphi_0) & = \alpha^{-2} (N-A^{\varphi_0}),\label{eq: com relation for Hvarphi}
\end{align}
which are verified by the commutation relations
\begin{align}
 W(\alpha f_0 ) \alpha^{-2} N   W(\alpha f_0 )^* & = \alpha^{-2} N - \alpha^{-1} \phi(f_0) + \snorm[L^2]{f_0}^2,\label{eq: com relations 1}\\[2mm]
 W(\alpha f_0  ) \alpha^{-1} \phi(  G_x   )  W(\alpha f_0 )^*  & = \alpha^{-1}\phi( G_x  ) +  V^{\varphi_0}  , \label{eq: com relations 2}
\end{align}
which in turn are easily obtained via \eqref{eq: shift relations Weyl}. Using the unitarity of the Weyl operator we thus shall estimate
\begin{align}
& \norm[\mathscr H]{ \big ( e^{-iH^{\rm F}_\alpha  t  }   - e^{-i  H^{\varphi_0 }_\alpha t} \big) \varphi_0 \otimes  W(\alpha f_0 )^* \eta_0 } \nonumber \\[2mm]
& \hspace{2cm}  = \norm[\mathscr H]{ \big( e^{- i (  h^{\varphi_0} -\lambda   + \alpha^{-2}N  + \alpha^{-1}\phi(\delta G_x  ) ) t   }   -  e^{-i \alpha^{-2} (N-A^{\varphi_0})  t} \big) \varphi_0 \otimes \eta_0 }.
\end{align}
For notational convenience let us abbreviate
\begin{align}
\psi_\alpha (t ) = e^{- i (  h^{\varphi_0} -\lambda   + \alpha^{-2}N  + \alpha^{-1}\phi(\delta G_x  ) ) t   }   \varphi_0 \otimes \eta_0, \quad   \xi_\alpha (t)  =     \varphi_0 \otimes  e^{-i \alpha^{-2} (N-A^{\varphi_0})  t}  \eta_0  .
\end{align}
Application of Duhamel's principle then leads to
\begin{align}
\norm[\mathscr H]{  \psi_\alpha (t )  - \xi_\alpha (t )  }^2   =  2\re f_\alpha (t)+ 2 \re g_\alpha  (t) \label{eq: Duhamel expansion}
\end{align}
with
\begin{align}
f_\alpha  (t) & = - i \alpha^{-1}\int_0^t \lsp \psi_\alpha (s )  ,   \phi( \delta G_x )    \xi_\alpha (s)    \rspH \,  \D s   \label{eq: def of f(t,alpha)},\\
g_\alpha (t)  & = - i   \alpha^{-2} \int_0^t   \lsp  \psi_\alpha (s ) ,   P  \phi( \delta G_x  )  R \phi(\delta  G_x )   \xi_\alpha (s)   \rspH \, \D s   . \label{eq: def of h(t,alpha)}
\end{align}
Note that here we have used  $[N-A^{\varphi_0} , P ]=0$, $P\xi_\alpha (s) = \xi_\alpha (s)$ and $(h^{\varphi_0} - \lambda ) P = 0$. With $1= P + Q$ and $\langle \varphi_0, \delta G_x \varphi_0 \rangle_{L^2} = 0$ one further obtains 
\begin{align} 
f_\alpha (t)  & = - i\alpha^{-1}  \int_0^t     \lsp  \psi_\alpha(s)   ,      Q  \phi( \delta G_x  )  \xi_\alpha(s)  \rspH \, \D s . \label{eq: Duhamel expansion A(t) new} 
\end{align}
In the first part of the proof we do three partial integrations w.r.t.\ the time variable $s$. This leads to a perturbation like expansion of \eqref{eq: Duhamel expansion A(t) new} into different contributions. In particular, after the first partial integration, we obtain one term that equals $-g_\alpha ( t )$. Since this term would contribute an error of order $\vert t\vert \alpha^{-2}$, it is crucial that we included the second order correction in the effective dynamics.  All remaining contributions will be estimated separately in the second part of the proof and finally lead to the error in \eqref{eq: main estimate}.

To prepare the first partial integration we use the restricted resolvent $R = Q ( h^{\varphi_0} - \lambda )^{-1} Q $ in order to write
\begin{align}\label{eq: prepare for partial integration}
f_\alpha (t)  =  - \alpha^{-1}  \int_0^t    \lsp  e^{ i ( h^{\varphi_0} -\lambda ) s  }  \psi_\alpha(s)  ,     \big( \frac{d}{ds} e^{ i (   h^{\varphi_0}  - \lambda ) s } R  \big)  \phi ( \delta G_x )   \xi_\alpha(s)   \rsp_{\mathscr H}  \, \D s.
\end{align}
Using
\begin{align}
\frac{d}{ds} e^{ i ( h^{\varphi_0} -\lambda  ) s  }  \psi_\alpha(s)    & = - i  e^{ i ( h^{\varphi_0} - \lambda ) s  }   \big( \alpha^{-2}N + \alpha^{-1}\phi( \delta G_x  ) \big) \psi_\alpha(s)  ,\\[1mm]
\frac{d}{ds}   \xi_\alpha(s)    & = -i  \alpha^{-2} ( N  - A^{\varphi_0}  )    \xi_\alpha(s) ,
\end{align}
together with $R \psi_\alpha(0) = R\varphi_0 \otimes \eta_0 = 0$, one finds by partial integration
\begin{subequations}
\begin{align}
f_\alpha (t)   &  = -  \alpha^{-1}    \lsp   \psi_\alpha(t )    ,     R   \phi ( \delta G_x  )  \xi_\alpha(t)  \rsp_{\mathscr H}   \label{eq: partial integration 01 line 1}\\[2mm]
&  + i \alpha^{-3}  \int_0^t    \lsp    \psi_\alpha(s)   ,     R   \big ( \big[N,\phi( \delta G_x  )\big] + \phi (\delta G_x ) A^{\varphi_0}  \big)        \xi_\alpha(s)   \rsp_{\mathscr H}\, \D s \label{eq: partial integration 01 line 2}\\[0mm]
&  + i \alpha^{-2}   \int_0^t     \lsp    \psi_\alpha(s)   ,      \phi( \delta G_x  )  R    \phi( \delta G_x ) \xi_\alpha(s)    \rsp_{\mathscr H}\, \D s. \label{eq: partial integration 01 line 3}
\end{align}
\end{subequations}
In the last line the prefactor $\alpha^{-2}$ is not sufficient and we need to do a second partial integration. For that, we insert again the identity $ 1 = P  + Q $ on the left of $\phi(\delta G_x )$. The term containing $P $ equals
\begin{align}\label{eq: second order P term}
 i \alpha^{-2}   \int_0^t    \lsp  \psi_\alpha(s)   ,     P \phi( \delta G_x )  R \phi ( \delta G_x )\xi_\alpha(s)  \rsp_{\mathscr H}\, \D s = - g_\alpha( t ) ,
\end{align}
and thus
\begin{align}
 \eqref{eq: partial integration 01 line 3} + g_\alpha(t)   =  i \alpha^{-2}   \int_0^t    \lsp  \psi_\alpha(s)  ,   Q  \phi( \delta G_x )  R  \phi ( \delta G_x  )  \xi_\alpha(s)  \rsp_{\mathscr H} \, \D s .
\end{align}
In this term we can integrate by parts similarly as in \eqref{eq: prepare for partial integration} which leads to
\begin{subequations}
\begin{align}
 \eqref{eq: partial integration 01 line 3}   +  g_\alpha(t)   & =  \alpha^{-2} \lsp \psi_\alpha( t )   ,       R  \phi( \delta G_x  )  R    \phi( \delta G_x   ) \xi_\alpha( t )  \rsp_{\mathscr H}  \label{eq: partial integration 02 line 1} \\[2mm]
& - i \alpha^{-4}  \int_0^t     \lsp \psi_\alpha(s)    ,     R    \big [N,\phi( \delta G_x  )  R   \phi( \delta G_x) \big]  \xi_\alpha(s)     \rsp_{\mathscr H}\,  \D s\label{eq: partial integration 02 line 2} \\[1mm]
& - i \alpha^{-4}  \int_0^t      \lsp \psi_\alpha(s)    ,      R      \phi( \delta G_x )   R   \phi( \delta G_x )  A^{\varphi_0} \xi_\alpha(s)      \rsp_{\mathscr H}\,  \D s \label{eq: partial integration 02 line 3} \\[1mm]
& - i \alpha^{-3}  \int_0^t      \lsp \psi_\alpha(s)  ,   P   \phi( \delta G_x  )  R    \phi( \delta G_x )    R    \phi( \delta G_x  )  \xi_\alpha(s)    \rsp_{\mathscr H}\, \D s  \label{eq: partial integration 02 line 4}\\[1mm]
& - i \alpha^{-3}  \int_0^t      \lsp \psi_\alpha(s)  ,   Q  \phi( \delta G_x  )  R    \phi( \delta G_x )    R    \phi( \delta G_x  )  \xi_\alpha(s)    \rsp_{\mathscr H}\, \D s . \label{eq: partial integration 02 line 5}
\end{align}
\end{subequations}
In the last line we do a third partial integration, i.e.
 \begin{subequations}
\begin{align}
\eqref{eq: partial integration 02 line 5} &  =  - \alpha^{-3} \lsp \psi_\alpha(t) ,  R \phi(\delta G_x  )  R \phi(\delta G_x  )  R  \phi(\delta G_x )  \xi_\alpha(t)  \rsp_{\mathscr H}   \label{eq: partial integration 03 line 1} \\[2mm]
& + i \alpha^{-5}  \int_0^t     \lsp  \psi_\alpha(s) ,   R  \big[N, \phi( \delta G_x  ) R \phi(\delta G_x  )R \phi( \delta G_x  ) \big]   \xi_\alpha(s) \rsp_{\mathscr H}\,  \D s\label{eq: partial integration 03 line 2} \\[1mm]
& +  i \alpha^{-5}  \int_0^t      \lsp \psi_\alpha(s)  ,   R  \phi(\delta G_x ) R  \phi(\delta G_x  )  R \phi(\delta G_x ) A^{\varphi_0}   \xi_\alpha(s)  \rsp_{\mathscr H}\,  \D s \label{eq: partial integration 03 line 3} \\[1mm]
& + i \alpha^{-4}  \int_0^t      \lsp \psi_\alpha(s),     \phi(\delta G_x ) R   \phi(\delta G_x ) R   \phi(\delta G_x  )  R    \phi(\delta G_x   )  \xi_\alpha(s) \rsp_{\mathscr H}\, \D s . \label{eq: partial integration 03 line 4}
\end{align}
\end{subequations}
\noindent Summing the above expansion up we arrive at
\begin{align*}
& f_\alpha(t)   + g_\alpha(t)  \nonumber\\[2mm]
& \quad =  \eqref{eq: partial integration 01 line 1} +  \eqref{eq: partial integration 01 line 2}  +  \eqref{eq: partial integration 02 line 1}  + \eqref{eq: partial integration 02 line 2}  + \eqref{eq: partial integration 02 line 3} + \eqref{eq: partial integration 02 line 4} +  \eqref{eq: partial integration 03 line 1}  +  \eqref{eq: partial integration 03 line 2}  +  \eqref{eq: partial integration 03 line 3} + \eqref{eq: partial integration 03 line 4}.
\end{align*}
In the remainder of the proof we separately estimate each summand on the right side. This is readily done using basic inequalities in combination with Lemmas \ref{Lemma: a priori estimates} and \ref{Lemma: a priori estimates 2}. At the end, we conclude by applying Gronwall's inequality.\medskip 

\noindent \textbf{Term \eqref{eq: partial integration 01 line 1}}. In the first boundary term from the partial integration we have
\begin{align}
\eqref{eq: partial integration 01 line 1} & = -  i  \alpha^{-1}    \lsp   \psi_\alpha(t)  -   \xi_\alpha(t) ,  R   \phi (\delta G_x  )    \xi_\alpha(t)  \rsp_{\mathscr H}
 \end{align}
since  $ R \xi_\alpha(t)  = 0$. Using the Cauchy--Schwarz inequality we obtain
\begin{align}
\vert \eqref{eq: partial integration 01 line 1}  \vert &   \le \frac{1}{4} \norm[\mathscr H]{   \psi_\alpha(t)  -   \xi_\alpha(t)  }^2 +  \alpha^{-2}  \norm[{\mathscr H}]{   R \phi( \delta G_x  )   \xi_\alpha(t)   }^2,
\end{align}
and with \eqref{eq: a priori bound 1} and \eqref{eq: time dependent a priori estimates 1},
\begin{align}
  \norm[{\mathscr H}]{   R \phi( \delta G_x)   \xi_\alpha(t)     }^2 \le   C \norm[\mathscr H]{  (N+1)^{1/2}    \xi_\alpha(t)    }^2 \le C  \exp(c \vert t \vert \alpha^{-2} ).
\end{align}
\textbf{Terms \eqref{eq: partial integration 02 line 1} and \eqref{eq: partial integration 03 line 1}}. For the other two boundary terms we proceed similarly and find  
\begin{align}
\vert \eqref{eq: partial integration 02 line 1} \vert  & \le  \frac{1}{4} \norm[\mathscr H]{  \psi_\alpha(t)  -   \xi_\alpha(t) }^2 + \alpha^{-4}   \norm[\mathscr H]{  R  \phi( \delta G_x  )  R \phi( \delta G_x   )    \xi_\alpha(t)  } \nonumber \\[2.5mm]
&\le \frac{1}{4} \norm[\mathscr H]{    \psi_\alpha(t)  -   \xi_\alpha(t)  }^2 +C\alpha^{ - 4} \exp(c\vert t\vert \alpha^{-2})
\end{align}
as well as
\begin{align}
\vert \eqref{eq: partial integration 03 line 1} \vert &  \le  \frac{1}{4} \norm[\mathscr H]{  \psi_\alpha(t)  -   \xi_\alpha(t) }^2 + \alpha^{-6}   \norm[\mathscr H]{R  \phi( \delta G_x  )    R  \phi( \delta G_x  )  R \phi( \delta G_x   )    \xi_\alpha(t)  } \nonumber \\[2.5mm]
&\le \frac{1}{4} \norm[\mathscr H]{    \psi_\alpha(t)  -   \xi_\alpha(t)  }^2 +C\alpha^{ - 6} \exp(c\vert t\vert \alpha^{-2})
\end{align} 
where we have used \eqref{eq: a priori bound 2} and \eqref{eq: a priori bound 3} in combination with \eqref{eq: time dependent a priori estimates 1}.\medskip

\noindent \textbf{Term \eqref{eq: partial integration 01 line 2}}. In this term we have
\begin{align}
\eqref{eq: partial integration 01 line 2} = i \alpha^{-3}  \int_0^t    \lsp     \psi_\alpha(s)  -   \xi_\alpha(s)   ,     R \big( \big[ N,\phi( \delta G_x  )\big ]  +  \phi (\delta G_x   ) A^{\varphi_0}  \big)    \xi_\alpha(s)  \rsp_{\mathscr H}\, \D s.
\end{align}
Using \eqref{eq: a priori bound 1}, the third line of \eqref{eq: a priori bound 3} and \eqref{eq: time dependent a priori estimates 1} we estimate
 \begin{align}
\vert \eqref{eq: partial integration 01 line 2}\vert & \le \alpha^{-2}  \int_0^t    \norm[\mathscr H]{   \psi_\alpha(s)  -   \xi_\alpha(s)   }^2 \, \D s  \nonumber \\[1.5mm]
& \quad  + \frac{1}{2} \alpha^{-4} \int_0^t \big( \norm[\mathscr H]{ R  \big[N,\phi(\delta G_x) \big]  \xi_\alpha(s)   }^2 + \norm[\mathscr H]{R \phi(\delta G_x ) A^{\varphi_0}  \xi_\alpha(s)  }^2 \big) \D s  \nonumber \\[1.5 mm]
& \le \alpha^{-2}     \int_0^t    \norm[\mathscr H]{    \psi_\alpha(s)  -   \xi_\alpha(s)     }^2 \, \D s  +  C \alpha^{-2} (\exp(c \vert t\vert \alpha^{-2})-1) .
\end{align}
\textbf{Terms \eqref{eq: partial integration 02 line 2} and \eqref{eq: partial integration 02 line 3}.} Similarly as in the previous term,
\begin{align}
& \vert \eqref{eq: partial integration 02 line 2} \vert   + \vert \eqref{eq: partial integration 02 line 3} \vert  \le \alpha^{-2} \int_0^t    \norm[\mathscr H]{   \psi_\alpha(s)  -   \xi_\alpha(s)  }^2\, \D s   \nonumber \\
& \quad +  \frac{1}{2}  \alpha^{-6}   \int_0^t  \big(   \norm[\mathscr H]{ R  \big[ N,  \phi(\delta G_x )  R  \phi(\delta G_x )  \big]   \xi_\alpha(s)    }^2 + \norm[\mathscr H]{ R  \phi(\delta G_x )  R  \phi(\delta G_x )  A^{\varphi_0}    \xi_\alpha(s)   }^2  \big) \D s  ,
\end{align}
and thus by means of \eqref{eq: a priori bound 2}, \eqref{eq: a priori bound 4} and \eqref{eq: time dependent a priori estimates 1} we obtain
\begin{align}
\vert \eqref{eq: partial integration 02 line 2} \vert   + \vert \eqref{eq: partial integration 02 line 3} \vert    \le \alpha^{-2} \int_0^t \norm[\mathscr H]{   \psi_\alpha(s)  -   \xi_\alpha(s)  }^2 \, \D s  +    C\alpha^{-4}  ( \exp(c\vert t \vert \alpha^{-2}) -1  ) .
\end{align}
\textbf{Term \eqref{eq: partial integration 02 line 4}.} In this line we keep the real part (cf.\ \eqref{eq: Duhamel expansion}) and have
\begin{align}
&  \re \eqref{eq: partial integration 02 line 4} = \alpha^{-3}  \int_0^t   \im  \lsp \psi_{\alpha}(s) - \xi_\alpha(s)   ,  P \phi(\delta G_x ) R  \phi(\delta G_x )  R   \phi(\delta G_x  )  \xi_\alpha(s)   \rsp_{\mathscr H}\,  \D s
\end{align}
(the imaginary part of the added expectation value is zero). The absolute value of the right side is bounded from above by
\begin{align}
\vert \re \eqref{eq: partial integration 02 line 4} \vert & \le \frac{1}{2}\alpha^{-2} \int_0^t  \big(  \norm[\mathscr H]{ \psi_{\alpha}(s) - \xi_\alpha(s) }^2  + \frac{1}{2} \alpha^{-4} \int_0^t    \norm[\mathscr H]{ P \phi(\delta G_x  ) R   \phi(\delta G_x  )  R   \phi(\delta G_x  )  \xi_\alpha(s)  }^2 \big) \D s  \nonumber \\
& \le \alpha^{-2} \int_0^t    \norm[\mathscr H]{   \psi_{\alpha}(s) - \xi_\alpha(s) }^2 \, \D s + C \alpha^{-2} ( \exp(c\vert t \vert \alpha^{-2}) - 1  ),
\end{align}
where one uses \eqref{eq: a priori bound 3} and \eqref{eq: time dependent a priori estimates 1} in the second step.\medskip

\noindent\textbf{Term \eqref{eq: partial integration 03 line 2}.} By means of \eqref{eq: a priori bound 3} and \eqref{eq: time dependent a priori estimates 1} one obtains
\begin{align}
\vert \eqref{eq: partial integration 03 line 2}  \vert  & \le \alpha^{-5}  \int_0^t   \norm{R \big [ N, \phi( \delta G_x  ) R \phi(\delta G_x  )R \phi( \delta G_x  ) \big ]  \xi_\alpha(s) }\, \D s \nonumber \\[2mm]
& \le C \alpha^{-3} (\exp(c\vert t \vert \alpha^{-2})-1).
\end{align}
\textbf{Term \eqref{eq: partial integration 03 line 3}.} In this term one can use \eqref{eq: a priori bound 5} and \eqref{eq: time dependent a priori estimates 1} to find
\begin{align}
\vert \eqref{eq: partial integration 03 line 3} \vert & \le \alpha^{-5}  \int_0^t     \norm{    R  \phi(\delta G_x  ) R   \phi(\delta G_x )  R \phi(\delta G_x ) A^{\varphi_0} \xi_\alpha (s)  }\, \D s  \le C \alpha^{-3} ( \exp(c\vert t \vert \alpha^{-2}) - 1) .
\end{align}
\textbf{Term \eqref{eq: partial integration 03 line 4}.} For the last term we apply \eqref{eq: a priori bound 6}
in combination with
\begin{align}
\snorm[\mathscr H]{  p \,  \psi_\alpha(s)   }  = \snorm[\mathscr H]{   p \, e^{ - i H^{\rm F}_ \alpha s}\varphi_0 \otimes W(\alpha f_0)^*  \eta_0 },
\end{align}
see \eqref{eq: com relation for HF}, as well as \eqref{eq: time dependent a priori estimates 1} and \eqref{eq: time dependent a priori estimates 2}. This leads to
\begin{align}
\vert \eqref{eq: partial integration 03 line 4} \vert & \le \alpha^{-4}  \int_0^t   \big\vert   \lsp \psi_\alpha(s)     \phi(\delta G_x ) R   \phi(\delta G_x ) R  \phi(\delta G_x )  R   \phi(\delta G_x )  \xi_\alpha(s)  \rsp_{\mathscr H}\big\vert\, \D s \nonumber \\
& \le C \alpha^{-4}  \int_0^t  \big(1+\norm[\mathscr H]{p \, e^{ - i H^{\rm F}_\alpha s}  \varphi_0 \otimes W(\alpha f_0)^*  \eta_0 } \big) \norm[\mathscr H]{ (N+1)^2 \xi_\alpha(s) }\, \D s \nonumber\\[2mm]
& \le C \alpha^{-2}( \exp( c \vert t \vert \alpha^{-2}) - 1) .
\end{align}
\textbf{Conclusion.} In total, we have shown
\begin{align}\label{eq: Groennwall inequality}
  \norm[\mathscr H]{ \psi_{\alpha}(s) - \xi_\alpha(s) }^2  & \le  C \alpha^{-2} \exp(c \vert t\vert \alpha^{-2} )    +  C \alpha^{-2}    \int_0^t  \norm[\mathscr H]{ \psi_{\alpha}(s) - \xi_\alpha(s) }^2 \, \D s,
\end{align}
from which the claimed bound follows by the integral version of Gronwall's inequality.\hfill $\square$\\

\noindent \label{sec: 2.3}2.3.\ \textbf{Proofs of Lemmas \ref{Lemma: a priori estimates} and \ref{Lemma: a priori estimates 2}.} \label{Sec: remaining proofs} The main tool of the proof of Lemma \ref{Lemma: a priori estimates} is the commutator method by Lieb and Yamazaki \cite{LiebY1958} by which one improves the behaviour of the interaction at large momenta using the regularity of the electron wave function. More precisely one writes
\begin{align}\label{eq: G as commutator}
G_x  = \tilde G_{x} -  p \cdot  K_{x}  + K_{x} \cdot p
\end{align}
with $\tilde G_x$ and $K_x$ defined by
\begin{align}
\tilde G_{x}(k) = G_x(k) \chi_{[0,1]}(\vert k \vert), \quad K_{ x}(k) = \frac{k}{\vert k \vert^2}G_x(k) \chi_{(1,\infty)}(\vert k \vert),
\end{align}
respectively, where $\chi$ denotes the characteristic function, i.e.\ $\chi_A(r)=1$ for all $r \in A\subseteq \mathbb R$ and $\chi_A(r)=0$ otherwise. The functions $\tilde G_{x}$ and $K_x$ are square-integrable,
\begin{align}\label{eq L2 norm of G and K}
\sup_{x\in \mathbb R^3}( \vert \vert \tilde G_{x} \vert \vert_{L^2} +  \vert \vert K_{ x}  \vert \vert_{L^2} ) <\infty,
\end{align}
and thus one can use the common bounds for the annihilation and creation operators, namely 
\begin{align}\label{eq: bounds for a a*}
\snorm[\mathscr H]{a(g) \Psi} \le  \snorm[L^2]{g}\, \snorm[\mathscr H]{N^{1/2}\Psi},\quad \snorm[\mathscr H]{a^*(g) \Psi}  \le  \snorm[L^2]{g}\, \snorm[\mathscr H]{(N+1)^{1/2}\Psi}
\end{align}
for any $g\in L^2(\mathbb R^3,\D k)$.
\begin{proof}[Proof of Lemma \ref{Lemma: a priori estimates}] For the proof of \eqref{eq: a priori bound 1}, we set $a^{\#} \in \{ a,a^*\}$ and use \eqref{eq: G as commutator}, \eqref{eq L2 norm of G and K}, $\snorm[L^2]{f_0}<\infty$ and \eqref{eq: bounds for a a*} to estimate
\begin{align}
\norm[\mathscr H]{   R   a^{\#} (\delta G_x  ) P \Psi } &\le   \norm[\mathscr H]{  R  a^{\#} (\tilde G_x -f_0 )  P  \Psi } +  \norm[\mathscr H]{  R   p \cdot a^{\#} (K_x)  P  \Psi}  +\norm[\mathscr H]{  R  a^{\#} ( K_x ) \cdot p P  \Psi } \nonumber \\[2mm]
& \le  C \big( \snorm[] {  R  }  + \snorm[]{  R  p} + \snorm[]{   R  } \, \snorm[]{p P } \big)  \norm[\mathcal F]{(N+1)^{1/2} \eta} \nonumber \\[2mm]
& \le C  \norm[\mathcal F ]{(N+1)^{1/2} \eta},\label{eq: bound for a star in the proof}
\end{align}
where $\snorm{ \cdot}  = \snorm[\mathscr L]{ \cdot}$ denotes the norm on the space of bounded operators $\mathscr L(L^2(\mathbb R^3,\D x))$. That $ \snorm{ R } + \snorm[L^2]{ p P} <\infty$ is clear. To show $\snorm{ R  p } <\infty$ we compute
\begin{align}
\norm[L^2]{p R  \psi}^2 & = \lsp  \psi, R   (h^{\varphi_0} - \lambda ) R  \psi \rsp_{L^2} +  \lsp \psi, R   ( \lambda - V^{\varphi_0}) R \psi \rsp_{L^2}\nonumber\\[2mm]
& \le \lsp  \psi, R  \psi \rsp_{L^2} +  \frac{1}{2} \lsp \psi,  R    p^2  R   \psi \rsp_{L^2} + C \norm[L^2]{R \psi}^2
\end{align}
where we used $\pm V^{\varphi_0}  \le \frac{1}{2} p^2 + C$ as shown, e.g.\ in  \cite[Lemma III.2]{LeopoldRSS2019}.\footnote{Note that our potential $ V^{\varphi_0}$ coincides (up to a factor) with $V_{\varphi}$ for $\varphi = f_0\in L^2(\mathbb R^3,\D k)$ in \cite{LeopoldRSS2019}.} Since the bound \eqref{eq: bound for a star in the proof} holds equally if $R$ is replaced by $P$ and since
\begin{align}\label{eq: commutator N phi}
\big[ N, \phi(\delta G_x  ) \big] = a^*(\delta G_x ) - a(\delta G_x ),
\end{align}
this proves \eqref{eq: a priori bound 1}.

In order to prove  \eqref{eq: a priori bound 2} we derive the bound for  $\norm{ R a^{\#_1}(\delta G_x  )  R   a^{\#_2}(\delta G_x)  \Psi  }$ with $a^{\#_i}  \in \{ a,a^*\}$. Proceeding similarly as in \eqref{eq: bound for a star in the proof}, we find 
\begin{align}
 \norm[\mathscr H]{ R a^{\#_1} (\delta G_x )  R    a^{\#_2}(\delta G_x )  \Psi  } &  \le C  \norm[\mathscr H]{ (N+1)^{1/2} R^{1/2} a^{\#_2}(\delta G_x )  \Psi  } . \label{eq: bound for a star a star in the proof}
\end{align}
From here we use
\begin{align}
(N+1)^{1/2} R^{1/2}  a(\delta G_x  )\Psi & = R^{1/2}  a(\delta G_x ) N^{1/2}\Psi,\\[1mm]
(N+1)^{1/2} R^{1/2}  a^*(\delta G_x  )\Psi & = R^{1/2}  a^*(\delta G_x ) (N+2)^{1/2}\Psi,
\end{align} 
together with
\begin{align}
\norm[\mathscr H]{   R^{1/2} a^{\#_2} (\delta G_x )   \Psi} \le C \norm[\mathcal F]{    (N+1)^{1/2}  \eta } .
\end{align}
The latter is obtained in complete analogy to \eqref{eq: bound for a star in the proof}. The bounds for the other terms on the l.h.s. of \eqref{eq: a priori bound 2} are derived the same way. Since the derivation of \eqref{eq: a priori bound 3} and \eqref{eq: a priori bound 4} is also very similar, we omit further details.

To prove \eqref{eq: a priori bound 5} we proceed again as in \eqref{eq: bound for a star in the proof} and find
\begin{align}
&  \big\vert \lsp \Phi, a^{\#_1} (\delta G_x  )R a^{\#_2}( \delta G_x ) R a^{\#_3}(\delta G_x  )  R a^{\#_4} (\delta G_x  ) \Psi \rsp_{\mathscr H}  \big\vert \nonumber \\[3mm]
& \le \norm[\mathscr H]{\Phi}\norm[\mathscr H]{ \big( a^{\#_1}( \tilde G_{x}-f_0 )  -  a^{\#_1}(K_{x} ) \cdot p \big) R   a^{\#_2} ( \delta G_x  ) R a^{\#_3}( \delta G_x ) R a^{\#_4} (\delta G_x  )   \Psi } \nonumber \\[3mm]
& \hspace{2cm}+ \Big\vert \lsp \Phi, p\cdot a^{\#_1}(K_{x}) R a^{\#_2} (\delta G_x ) R  a^{\#_3}( \delta G_x  ) R a^{\#_4} (\delta G_x  )  \Psi  \rsp_{\mathscr H}  \Big\vert  \nonumber \\[3mm] 
& \le C    \big(1 + \snorm[\mathscr H]{p\Phi} \big)  \norm[\mathscr H]{(N+1)^{1/2}   R^{1/2} a^{\#_2}( \delta G_x  ) R  a^{\#_3}(\delta G_x ) R^{1/2} a^{\#_4}( \delta G_x  ) \Psi}.
\end{align}
By estimating the last factor similarly as the right hand side of \eqref{eq: bound for a star a star in the proof} we obtain \eqref{eq: a priori bound 5}.
\end{proof}

\begin{proof}[Proof of Lemma \ref{Lemma: a priori estimates 2}] We start by verifying the following bound,
\begin{align}\label{eq: bound for time deriv of z(t)}
\big\vert \lsp \eta ,  (N+1)^{j-1} [N,A^{\varphi_0} ] (N+1)^{m-j} \eta \rsp_{\mathcal F}\big\vert  & \le C  \norm[\mathcal F]{ (N+1)^{m/2} \eta}^2
\end{align}
for $1\le j\le m $. To do so, use \eqref{eq: commutator N phi} to write
\begin{align}\label{eq: NA commutator}
 [N,A^{\varphi_0}] & = P  (a^*(  G_x) - a(G_x) ) R  (a^*(  G_x) + a(G_x) ) P  + \text{h.c.},
\end{align}
and then estimate each term separately. We illustrate the argument for the term $A^{\varphi_0}_{++} = P a^*(  G_x) R  a^*(  G_x)P $ for which we have
\begin{align}\label{eq: cases}
&  \lsp \eta ,  (N+1)^{j-1} A^{\varphi_0}_{++} (N+1)^{m-j} \eta \rsp_{\mathcal F}  \nonumber \\[2mm]
& = \begin{cases}
\lsp \eta , (N+1)^{j-1} (N-1)^{\frac{m}{2}+1-j} A^{\varphi_0}_{++}   (N+1)^{\frac{m}{2}-1} \eta \rsp_{\mathcal F}\quad \hspace{2.925cm} ( {\frac{m}{2}+1}\ge j ) ,\\[3mm]
\lsp \eta ,  (N+1)^{\frac{j+i}{2}-1} A^{\varphi_0}_{++}  (N+3)^{\frac{j-i}{2}} (N+1)^{m-j} \eta \rsp_{\mathcal F}\ \ \, \text{with}\   i  = m+2-j \ \ (  j\ge \frac{m}{2}+1 ).
\end{cases}\nonumber
\end{align}
Taking the absolute value and using the Cauchy--Schwarz inequality we can bound the first line from above by
\begin{align}
\norm[\mathcal F ]{(N+1)^{j-1} (N-1)^{\frac{m}{2}+1-j}\eta} \norm[\mathcal F]{A^{\varphi_0}_{++}  (N+1)^{\frac{m}{2}-1}  \eta}  \le C   \norm[\mathcal F]{ (N+1)^{\frac{m}{2}}\eta }^2 ,
\end{align}
where we used $\snorm[\mathcal F]{A^{\varphi_0}_{++} \eta }\le C \snorm[\mathcal F]{(N+1)\eta }$, $\eta \in \mathcal F$, which is proved the same way as the bound for the left side of \eqref{eq: bound for a star a star in the proof}. Similarly we find the following upper bound for the second line,
\begin{align}
\norm[\mathcal F]{ (N+1)^{\frac{j+i}{2}-1} \eta } \norm[\mathcal F]{A_{++}^{\varphi_0}  (N+3)^{\frac{j-i}{2}} (N+1)^{m-j}  \eta}  \le C   \norm[\mathcal F]{ (N+1)^{\frac{m}{2}}\eta}^2 .
\end{align}
Repeating the same argument for the other terms in \eqref{eq: NA commutator} leads to the stated bound in \eqref{eq: bound for time deriv of z(t)}. 

Next let $\eta_\alpha(t) = \exp( -i \alpha^{-2}(N-A^{\varphi_0 }) t )   \eta$ and compute the time-derivative
\begin{align}\label{eq: time deriv of z(t)}
 \frac{d}{dt} \lsp \eta_\alpha (t) , N^m \eta_\alpha (t)\rsp_{\mathcal F}   & = -  \alpha^{-2} \sum_{j=1}^m \lsp \eta_\alpha (t) ,  N^{j-1} i [N, A^{\varphi_0}] N^{m-j} \eta_\alpha (t)\rsp_{\mathcal F}  
\end{align}
which for $m \in \{1,2,3,4,5\}$ is easily checked explicitly. Setting $z(t) = \sum_{j=1}^5 \norm[{\mathcal F}]{(N+1)^{j/2} \eta_\alpha(t)}^2$, we have by \eqref{eq: bound for time deriv of z(t)} and \eqref{eq: time deriv of z(t)}, $\vert \frac{d}{dt} z(t)\vert \le C  \alpha^{-2} z(t) $. Since $\sup_{\alpha >0} z(0) <\infty$ by assumption, it follows from Gronwall's inequality that $z(t) \le C  \exp(c\vert t \vert \alpha^{-2}) $.


For a proof of \eqref{eq: time dependent a priori estimates 2}, let $\psi_\alpha(t) = e^{-i H^{\rm F}_\alpha t}  \varphi_0 \otimes  W(\alpha f_0)^* \eta $ and estimate 
\begin{align}
\norm[\mathscr H ]{  p\, \psi_\alpha (t) }^2 &  \le C  \lsp \psi_\alpha(t),  (H_\alpha^{\rm F} +1)  \psi_\alpha (t) \rsp_{\mathscr H}  = C  \lsp \psi_\alpha (0),  (H_\alpha^{\rm F} +1)  \psi_\alpha (0) \rsp_{\mathscr H} \nonumber \\[2mm]
& = C \big( 1 +  \mathcal E^{\rm P}(\varphi_0) +  \alpha^{-2} \lsp \eta, N   \eta \rsp_{\mathcal F} \big)
\end{align}
for some constant $C>0$. Here we used $N\ge 0$ and $p^2 + \alpha^{-2}N \le C ( H_\alpha^{\rm F}+1)$ in the first step (see e.g. \cite[Lemma A.5]{Griesemer2017}) and the commutation relation \eqref{eq: com relation for HF} together with  $\langle \varphi_0 , \delta G_x \varphi_0 \rangle_{L^2}=0$ in the third step.
\end{proof}

\noindent \label{sec: 2.4}2.4.\ \textbf{Proofs of Proposition \ref{prop: effective hamiltonian} and  Corollaries \ref{cor: reduced density} and \ref{cor: main estimate bogoliubov trafo}.} 
\begin{proof}[Proof of proposition \ref{prop: effective hamiltonian}] The identity in \eqref{eq: effective hamiltonian} follows from $\mathcal E^{\rm P}(\varphi_0) = \lambda + \snorm[L^2]{f_0}^2$ together with the commutation relations \eqref{eq: com relations 1} and \eqref{eq: com relations 2}. That $\mathscr D(N) \subseteq \mathscr D(N-A^{\varphi_0}) $ follows from
\begin{align}
\norm[\mathcal F]{ A^{\varphi_0}\eta }  = \norm[\mathscr H]{P\phi(G_x) R \phi(G_x) \varphi_0 \otimes \eta} \le C \norm[\mathcal F]{(N+1)\eta}
\end{align}
which is proven the same way as the bound for the l.h.s.\ of \eqref{eq: bound for a star a star in the proof}. Using \eqref{eq: commutator N phi} one further finds
\begin{align}
 \big \vert \lsp   \eta ,\big[ A^{\varphi_0} , N \big] \eta  \rsp_{\mathcal F} \big\vert \le C \lsp \eta , N  \eta \rsp_{\mathcal F}
\end{align}
for all $\eta \in \mathcal F_0$ with $\mathcal F_0\subseteq \mathcal F$ denoting the dense subspace of all Fock space vectors that have only finitely many nonzero components. Since $\mathcal F_0 $ is a core of the number operator $N$, we can infer that $N-A^{\varphi_0}$ is essentially self-adjointn by a variant of Nelson's commutator theorem \cite[Corollary 1.1]{Faris1974}. Alternatively one could conclude self-adjointness of $N-A^{\varphi_0}$ from the criteria for self-adjointness of Fock space operators found in \cite{Falconi2015}.
\end{proof}

In the following two proofs we make use of the bound
\begin{align}\label{eq: trace trace bound}
\textnormal{Tr}_{\mathscr H_1} \big\vert \textnormal{Tr}_{\mathscr H_2} \vert \Psi  \rangle  \langle \Phi \vert  \big\vert \le \snorm[\mathscr H_1 \otimes \mathscr H_2]{\Psi} \, \snorm[\mathscr H_1 \otimes \mathscr H_2]{\Phi}
\end{align}
where $\mathscr H_1,\mathscr H_2$ are two separable Hilbert spaces and $\Psi,\Phi \in \mathscr H_1\otimes \mathscr H_2$. The inequality follows from the variational characterization of the trace. For a proof see \cite[Appendix D]{FrankG2017}.

\begin{proof}[Proof of Corollary \ref{cor: reduced density}] Using 
\begin{align}
\vert \varphi_0 \rangle \langle \varphi_0 \vert   =  \textnormal{Tr}_{\mathcal F} \Big  \vert e^{-i H^{\varphi_0}_\alpha t }  \varphi_0 \otimes W(\alpha f_0)^* \eta_0  \Big \rangle \Big \langle e^{-i H^{\varphi_0}_\alpha t }  \varphi_0 \otimes W(\alpha f_0)^* \eta_0 \Big\vert 
\end{align}
in combination with \eqref{eq: trace trace bound} one readily finds
\begin{align} 
\textnormal{Tr}_{L^2}\Big\vert  \textnormal{Tr}_{\mathcal F}  \big\vert \Psi_{\alpha}(t) \big \rangle \big \langle \Psi_{\alpha}  (t) \big \vert    - \big  \vert \varphi_0 \big \rangle \big \langle \varphi_0 \big \vert   \Big\vert  \le 2 \norm[\mathscr H]{ \big( e^{-i H^{\rm F }_\alpha t } - e^{-i H^{\varphi_0}_\alpha t } \big) \Psi_{\alpha}(0) }.
\end{align}
Together with Theorem \ref{theorem: main theorem} this proves the corollary.\medskip
\end{proof}

\begin{proof}[Proof of Corollary \ref{cor: main estimate bogoliubov trafo}] Below we shall prove the identity
\begin{align}\label{eq: bogoliubov identity}
 \exp\big( -i \alpha^{-2}(N-A^{\varphi_0 } + \varepsilon )   t \big) \eta_0 =  U_{\mathcal V_\alpha(t)} \eta_0
\end{align}
where $\varepsilon = \int_{\mathbb R^3} (2\pi \vert k \vert)^{-2} \snorm[L^2]{ R^{1/2} e^{-ikx}\varphi_0}^2 \text{d}k $. With this identity at hand, we can proceed as in the proof of Corollary \ref{cor: reduced density}, i.e.\ we use
\begin{align}
& \big\vert U_{\mathcal V_\alpha(t)} \eta_0 \big \rangle \big \langle  U_{\mathcal V_\alpha(t)} \eta_0 \big\vert  = \textnormal{Tr}_{L^2} \Big\vert \varphi_0 \otimes U_{\mathcal V_\alpha(t)} \eta_0 \Big \rangle \Big \langle \varphi_0 \otimes  U_{\mathcal V_\alpha(t)} \eta_0 \Big\vert \nonumber \\[1mm]
& \hspace{0.5cm} = \textnormal{Tr}_{L^2} \Big\vert   \varphi_0 \otimes \exp(-i (N-A^{\varphi_0}) t )  \eta_0 \Big \rangle \Big \langle      \varphi_0 \otimes   \exp(-i (N-A^{\varphi_0}) t )   \eta_0   \Big\vert \nonumber \\[1mm]
& \hspace{0.5cm} = \textnormal{Tr}_{L^2} \Big\vert  W(\alpha f_0) e^{-i  H_\alpha^{\varphi_ 0}   t}   \varphi_0 \otimes W(\alpha f_0)^*\eta_0 \Big \rangle \Big \langle  W(\alpha f_0) e^{-i  H_\alpha^{\varphi_ 0}   t}    \varphi_0 \otimes   W(\alpha f_0)^* \eta_0 \Big\vert
\end{align}
and by means of \eqref{eq: trace trace bound} we thus obtain
\begin{align} 
& \textnormal{Tr}_{\mathcal F}\Big\vert  \textnormal{Tr}_{L^2 }  \big\vert W(\alpha f_0) \Psi_{\alpha}(t) \big \rangle \big \langle  W(\alpha f_0)  \Psi_{\alpha}  (t) \big \vert -  \big\vert U_{\mathcal V_\alpha(t)} \eta_0 \big \rangle \big \langle  U_{\mathcal V_\alpha(t)} \eta_0 \big\vert  \Big\vert  \nonumber \\[2mm]
& \hspace{7cm} \le 2 \norm[\mathscr H]{ \big( e^{-i H^{\rm F }_\alpha t } - e^{-i H^{\varphi_0}_\alpha t } \big) \Psi_\alpha(0) }.
\end{align}
\textit{Proof of \eqref{eq: bogoliubov identity}.} Here we follow the argument from \cite[Lem. 2.8 and App. B]{BossmannPPS2019} where a similar identity was proven in the context of the dynamics of weakly interacting bosons. The argument is based on some straightforward computations, well-known facts about Bogoliubov transformations and quasi-free states and a general result about the dynamics generated by quadratic Hamiltonians \cite[Prop. 7]{NamN2017}.

At this point it is useful to introduce the pointwise annihilation and creation operators $a_k,a_k^*$ defined by the requirement that 
\begin{align}\label{eq: pointwise creation operators}
a(g) = \int_{\mathbb R^3} \overline{g(k)}\, a_k\,  \text{d}k,  \quad a^*(g) = \int_{\mathbb R^3} \, g(k)\, a_k^*\,  \text{d}k 
\end{align}
for any $g\in L^2(\mathbb R^3, \text{d}k )$. The commutation relations \eqref{eq: canonical commutation relations} now read 
\begin{align}\label{eq: pointwise commutation relations}
[a_k,a_l^*]= \delta(k-l), \quad [a_k,a_l] = [a^*_k,a^*_l]= 0\ \ \ \forall\,  k,l \in \mathbb R^3.
\end{align}
Using \eqref{eq: pointwise creation operators} and \eqref{eq: pointwise commutation relations} and abbreviating $\varepsilon = \int_{\mathbb R^3} (2\pi \vert k \vert)^{-2} \snorm[L^2]{ R^{1/2} e^{-ikx}\varphi_0}^2 \text{d}k$ a short computation leads to
\begin{align}\label{eq: quadratic Hamiltonian}
 N - A^{\varphi_0} + \varepsilon =  \text{d}\Gamma (1 - \mathcal G) - \frac{1}{2}\int_{\mathbb R^3} \int_{\mathbb R^3} \big( \mathcal K(k,l) \,  a^*_k a_l^* + \overline{\mathcal K(k,l)} \, a_k a_l \big)  \text{d}k  \text{d}l
\end{align}
where $\text{d}\Gamma (1 - \mathcal G)$ denotes the second quantization of the one-body operator $1-\mathcal G$, i.e.
\begin{align}
\text{d}\Gamma (1 - \mathcal G) = \int_{\mathbb R^3} \,  a_k^* a_k   \, \text{d}k  -  \int_{\mathbb R^3}  \int_{\mathbb R^3} \mathcal G(k,l) \, a_k^* a_l\,  \text{d}k  \text{d}l,
\end{align}
see \eqref{eq: definition of K(k,l)} and \eqref{eq: definition of G(k,l)} for a definition of $\mathcal K(k,l)$ and $\mathcal G(k,l)$, respectively.
To the operator on the right side of \eqref{eq: quadratic Hamiltonian} we can apply \cite[Prop. 7]{NamN2017}. The requirements of this proposition are satisfied since $1 - \mathcal G : L^2(\mathbb R^3,\D k) \to L^2(\mathbb R^3, \D k)  $ is bounded and $\mathcal K:L^2(\mathbb R^3, \D k)  \to L^2(\mathbb R^3, \D k) $ is a Hilbert--Schmidt operator which can be verified by means of \eqref{eq: G as commutator}. By part (iii) of \cite[Prop. 7]{NamN2017} it follows in particular that for any quasi-free state $\eta_0 \in \mathcal F$, the time-evolved state $\eta_\alpha(t) = \exp(-i \alpha^{-2}  (N - A^{\varphi_0} +  \varepsilon)t )\eta_0$ is again quasi-free (the bound $\langle \eta_\alpha(t), N \eta_\alpha(t) \rangle \le C  \exp(c\vert t \vert \alpha^{-2})$ can be checked directly by means of Gronwall's inequality). It is further not difficult to verify that the state $U_{\mathcal V_\alpha(t)}\eta_0$ is also quasi-free ($\eta_0 = U_{\mathcal W}\Omega_0$ for some Bogoliuv map $\mathcal W$ and thus $U_{\mathcal V_\alpha(t)}\eta_0 = U_{\mathcal V_\alpha(t) \circ \mathcal W}\Omega_0$ with Bogoliubov map $\mathcal V_\alpha(t) \circ \mathcal W$). To show equality between the quasi-free states $\eta_\alpha(t)$ and $U_{\mathcal V_\alpha(t)}\eta_0$ we compare their reduced one-body density matrices. This is sufficient because of the well-known fact that quasi-free states are uniquely determined by their reduced one-body density matrices. For $\xi\in \mathcal F$ the reduced one-body density matrices  $\gamma_\xi   : L^2(\mathbb R^3, \D k) \to L^2( \mathbb R^3, \D k )$ and $\alpha_\xi : L^2(\mathbb R^3, \D k) \to L^2( \mathbb R^3, \D k )$ are defined by
\begin{align}
\lsp f, \gamma_\xi g\rsp_{L^2} = \lsp \xi , a^*(g)a(f) \xi \rsp_{\mathcal F}, \quad \lsp f, \alpha_\xi  \overline{g}\rsp_{L^2} = \lsp \xi , a(g)a(f) \xi \rsp_{\mathcal F}
\end{align}
for all $f,g\in L^2(\mathbb R^3, \D k)$. In order to show $ \gamma_{\eta_\alpha(t)} =  \gamma_{U_{{\mathcal V}_\alpha(t)} \eta_0 } $ and $ \alpha_{\eta_\alpha(t)} = \alpha_{U_{{\mathcal V}_\alpha(t)} \eta_0}$ we argue that they solve the same pair of differential equations with the same initial condition $\gamma_{\eta_0}$ and $\alpha_{\eta_0}$, respectively, and then use that the solution to this pair of differential equations is unique (the latter was shown in \cite[Prop. 7]{NamN2017}).

Instead of computing the time derivative of $\gamma_{\eta_\alpha(t)}$ and $\alpha_{\eta_\alpha(t)}$, and similarly for $U_{\mathcal V_\alpha(t)}\eta_0$ below, it is more convenient to determine the time derivative of $\lsp \eta_\alpha (t),	A(F_1) A(F_2) \eta_\alpha (t) 	\rsp_{\mathcal F} $ with $A(F)$ the generalized annihilation operator as defined above \eqref{eq: generalized annihilation operator}. For $F_1,F_2 \in L^2(\mathbb R^3, \D k)\oplus L^2(\mathbb R^3,\D k)$ we have
\begin{align}\label{eq: time-der of red dens 1a}
i\frac{d}{dt}\lsp \eta_\alpha (t),	A(F_1) A(F_2) \xi_\alpha (t) 	\rsp_{\mathcal F} & = \alpha^{-2}  \lsp \eta_\alpha (t) 	,	 \big[ N-A^{\varphi_0}, A(F_1) A(F_2) \big] \xi_\alpha (t) 	\rsp_{\mathcal F}
\end{align}
and it follows by a straightforward computation that
\begin{align}\label{eq: time-der of red dens 1b}
\big[ N - A^{\varphi_0}, A(F_1) A(F_2) \big]  =  A( \mathcal AF_1 ) A( F_2)  +  A( F_1) A( \mathcal A F_2) 
\end{align}
with
\begin{align}
\mathcal A =   \begin{pmatrix}
1-  \mathcal G   & \mathcal K \\ 
-\overline{ \mathcal K } &  -1 + \overline{ \mathcal G } 
\end{pmatrix}.
\end{align}
Next we use $U_ {\mathcal V}^* A(F) U_{\mathcal V} = A( \mathcal V^{-1} F)$, cf.\ \eqref{eq: unitary implementation}, to obtain
\begin{align}\label{eq: time-der of red dens 2a}
\lsp  U_{  {\mathcal V}_\alpha(t)} \eta_0 , 	A(F_1) A(F_2)  U_{  {\mathcal V}_\alpha(t)} \eta_0 	\rsp_{\mathcal F} & =  \lsp    \eta_0 , 	A(\mathcal V_\alpha^{-1 }(t) F_1) A( \mathcal V_\alpha^{-1 }(t) F_2)    \eta_0 	\rsp_{\mathcal F} .
\end{align}
By means of $ ( i\partial_t \mathcal V_\alpha^{-1}(t)) \mathcal V_\alpha(t) = -  \mathcal V_\alpha^{-1}(t) (i\partial_t \mathcal V_\alpha(t))$ together with $i\partial_t \mathcal V_\alpha(t) = \alpha^{-2} \mathcal A \mathcal V_\alpha(t)$, we can compute the time derivative
\begin{align}
& i \frac{d}{dt}  \lsp    \eta_0 , 	A(\mathcal V_\alpha^{-1 }(t) F_1) A( \mathcal V_\alpha^{-1 }(t) F_2)   \eta_0 	\rsp_{\mathcal F} \nonumber \\[2mm]
& =  \lsp    \eta_0 , 	\big( A(-i \partial_t \mathcal V_\alpha^{-1 }(t) F_1) A( \mathcal V_\alpha^{-1 }(t) F_2)    + A(\mathcal V_\alpha^{-1 }(t) F_1) A( -i \partial_t \mathcal V_\alpha^{-1 }(t) F_2)  \big)  \eta_0 	\rsp_{\mathcal F} \nonumber \\[2.5mm]
 & =  \alpha^{-2} \lsp    \eta_0 , 	\big( A(  \mathcal V_\alpha^{-1 }(t)\mathcal A F_1) A( \mathcal V_\alpha^{-1 }(t) F_2)    + A(\mathcal V_\alpha^{-1 }(t) F_1) A(  \mathcal V_\alpha^{-1 }(t) \mathcal A  F_2)  \big)  \eta_0 	\rsp_{\mathcal F} \nonumber \\[2,5mm]
 & =  \alpha^{-2}\lsp   U_{\mathcal V_\alpha(t)} \eta_0 , 	\big( A(   \mathcal A   F_1) A(   F_2)    + A( F_1) A(    \mathcal A F_2)  \big)  U_{\mathcal V_\alpha(t)} \eta_0 \rsp_{\mathcal F} . \label{eq: time-der of red dens 2b}
 \end{align}
Comparing \eqref{eq: time-der of red dens 1a} and \eqref{eq: time-der of red dens 1b} with \eqref{eq: time-der of red dens 2a} and \eqref{eq: time-der of red dens 2b} we see that the pairs of reduced one-body density matrices $(\gamma_{\eta_\alpha(t)} , \alpha_{\eta_\alpha(t)})$ and $(\gamma_{U_{{\mathcal V}_\alpha(t)} \eta_0 } , \alpha_{U_{{\mathcal V}_\alpha(t)} \eta_0})$ solve the same differential equation. Since the solution to this equation is unique, see \cite[Prop. 7]{NamN2017}, and since $\eta_\alpha(0) = U_{{\mathcal V}_\alpha(0)} \eta_0 = \eta_0$, we conclude their equality. This implies $\eta_\alpha(t) = U_{{\mathcal V}_\alpha(t)} \eta_0 $ and hence proves the claimed identity.
\end{proof}

\section*{Acknowledgements}

I thank Marcel Griesemer for many interesting discussions about the Fr\"ohlich polaron and also for valuable comments on this manuscript. Helpful discussions with Nikolai Leopold and Robert Seiringer are also gratefully acknowledged. This work was partially supported by the \emph{Deutsche  Forschungsgemeinschaft (DFG)} through the Research Training Group 1838: \emph{Spectral Theory and Dynamics of Quantum Systems}.

\vspace{5mm}
\noindent \textsc{Fachbereich  Mathematik, Universit\"at Stuttgart, Germany}\\[1mm]
\textit{E-mail address:} \texttt{mitrouskas@mathematik.uni-stuttgart.de}

\end{document}